\definecolor{newdarkblue}{RGB}{24,64,184}
\definecolor{newgreen}{RGB}{80,170,55}
\numberwithin{equation}{section}
\newtheorem{theorem}{Theorem}[section]
    \newtheorem{proposition}[theorem]{Proposition}
    \newtheorem{lemma}[theorem]{Lemma}
\theoremstyle{definition}
    \newtheorem{definition}[theorem]{Definition}
    \newtheorem{remark}[theorem]{Remark}
    \newtheorem{example}[theorem]{Example}
\def\zz{{\mathbb Z}}
\def\nn{{\mathbb N}}
\def\cA{{\cal A}}
\def\P{{\mathbb P}}
\def\Q{{\mathbb Q}}
\newcommand{\Sc}{h_{\textnormal{c}}}
\newcommand{\Sr}{h_{\textnormal{r}}}
\newcommand{\commentOutCode}[1]{}
	\providecommand*{\diff}%
	{\@ifnextchar^{\DIfF}{\DIfF^{}}}
	\def\DIfF^#1{%
	\mathop{\mathrm{\mathstrut d}}%
	\nolimits^{#1}\gobblespace}
	\def\gobblespace{%
	\futurelet\diffarg\opspace}
	\def\opspace{%
	\let\DiffSpace\!%
	\ifx\diffarg(%
	\let\DiffSpace\relax
	\else
	\ifx\diffarg%
	\let\DiffSpace\relax
	\else
	\ifx\diffarg\{%
	\let\DiffSpace\relax
	\fi\fi\fi\DiffSpace}
\newcommand{\Exp}[1]{\mathrm{e}^{#1}}
\DeclareMathOperator{\supp}{supp}
\newcommand{\shift}{T}
\def\namedlabel#1#2{\begingroup
   \def\@currentlabel{#2}%
   \label{#1}\endgroup
}
\begin{document}

\title{On the Ziv--Merhav theorem beyond Markovianity}
\author{N.\ Barnfield\textsuperscript{a}, R.\ Grondin\textsuperscript{a}, G.\ Pozzoli\textsuperscript{b} and R.\ Raqu\'epas\textsuperscript{c}}
\date{}

\maketitle

\begin{center}
  \begin{minipage}[b]{0.5\textwidth}
    \small
    \centering
    a. McGill University \\
    Department of Mathematics and Statistics \\
    Montr\'{e}al QC, Canada
  \end{minipage}%
  \begin{minipage}[b]{0.5\textwidth}
    \small
    \centering
    b. Universit\`{a} degli Studi di Milano-Bicocca  \\
    Dipartimento di Matematica e Applicazioni \\
    Milan, Italy
  \end{minipage} 
  
  \vspace{1em}
  
  \begin{minipage}[b]{0.45\textwidth}
    \small
    \centering
    c. New York University \\
    Courant Institute of Mathematical Sciences \\
    New York NY, United States\\
  \end{minipage}
\end{center}

\begin{abstract}
    We generalize to a broader class of decoupled measures a result of Ziv and Merhav on universal estimation of the specific cross (or relative) entropy for a pair of multi-level Markov measures. The result covers pairs of suitably regular g-measures and pairs of equilibrium measures arising from the ``small space of interactions'' in mathematical statistical mechanics.
\end{abstract}

\begin{center}
    \emph{Dedicated to the memory of Jacob Ziv} (\emph{1931--2023})
\end{center}

\section{Introduction}

In 1993, Ziv and Merhav proposed a ``new notion of empirical informational divergence'', or relative-entropy estimator, based on the celebrated Lempel--Ziv compression algorithm~\cite{MZ93}. While this estimator received\,---\,to our knowledge\,---\,little attention in the mathematical literature, it (and its variants) has met with success in many practical applications across fields such as linguistics, medicine, and physics; see e.g.~\cite{BCL02,CF05,B+08,CFF10,RP12,LMDEC19,R+22}, to only cite a few. In fact, our main motivation for a more extensive rigorous treatment of the convergence of this estimator is that the very limited Markovian class of sources covered by the original result of Ziv and Merhav pales in comparison with the breadth of apparent applicability. 

Ziv and Merhav's sequence of estimators is defined as follows. Given two strings~$x_1^N$ and~$y_1^N$, let $c_N(y|x)$ be the number of words in a sequential parsing of~$y_1^N$ using the longest possible substrings of~$x_1^N$; if there is no such substring of~$x_1^N$, the parsed word is set to be one letter long. For example, if
\begin{align*}
    x &= 01000101110100111001000122021\dots, \\
    y &= 01100101000102011101001000210\dots,
\end{align*}
and $N=24$, then the Ziv--Merhav parsing of 
$
    y_1^{24}=011001010001020111010010
$
with respect to
$
    x_1^{24}=010001011101001110010001
$
is
\[
    y_1^{24} = 011|00101|00010|2|011101001|0
\]
and $c_{24}(y|x)=6$.\footnote{Throughout this paper, we will refer to the partitioning symbol ``$|$'' as a separator, and we will say that a separator \emph{falls within} a given string if the separator lies after one of the letters that make up the string.} 
Ziv and Merhav show that the estimator
\[
    \widehat{Q}_N(y,x) := \frac{c_N(y|x) \ln N}{N}
\]
converges to the specific cross entropy~$\Sc(\Q|\P)$ between the sources~$\P$ and~$\Q$ that have produced~$x$ and~$y$ respectively,
under the assumption that those measures come from irreducible multi-level Markov chains. We will refer to~$(\widehat{Q}_N)_{N=1}^\infty$ as the ZM estimator. The relative entropy~$\Sr(\Q|\P)$ can then be estimated by combining the above with an estimation of the specific entropy~$h(\Q)$, say \emph{à la} Lempel--Ziv~\cite{LZ78}. Both quantities are defined in Section~\ref{sec:setting} for the reader's convenience. Our goal is to generalize this result beyond Markovianity, namely under conditions \ref{it:ID}, \ref{it:KB} and \ref{it:FE} below.

One may note that the behaviour of $c_N$ is intimately related to the so-called Wyner--Ziv problem on \emph{waiting times}. With 
\[ 
    W_\ell(y,x) := \inf\{r \in \nn : x_{r}^{r+\ell} = y_1^\ell\},
\]
the Wyner--Ziv problem concerns the convergence
\begin{equation}
\label{eq:WZ-convergence}
    \frac{\ln W_\ell}{\ell} \to \Sc(\Q|\P)
\end{equation}
as $\ell\to\infty$ within sufficiently nice classes of measures~\cite{WZ89,Sh93,Ko98}. 
To see the relation, note that the length of the first word in the ZM parsing of $y_1^N$ with respect to~$x_1^N$ is\,---\,save some edge cases\,---\,the largest possible $\ell$ such that $W_{\ell}(y,x) \leq N - \ell+1$. This dual quantity is known as the \emph{longest match length}
\[ 
    \Lambda_N(y,x) := \max\left\{ 1, \sup\{\ell\in\nn : W_\ell(y,x) \leq N-\ell+1 \}\right\}.
\]
The length of the second word in this parsing is then\,---\,again save some edge cases handled in Section~\ref{ssec:comments}\,---\,the longest-match length $\Lambda_{N}(T^{\Lambda_N(y,x)}y,x)$, and so on. 
Any attempt at a theory of the asymptotic behaviour of waiting times and its derived quantities beyond Markovianity must take two important caveats into account. First, it is known that the specific cross entropy between two ergodic sources does not always exist; see e.g.~\cite[\S{A.5.2}]{vEFS93}.
Second, it is known that there exists a mixing measure~$\P$ such that~\eqref{eq:WZ-convergence} fails with $\Q=\P$; see~\cite[\S{4}]{Sh93}. While the precise breadth of the validity of~\eqref{eq:WZ-convergence} and its different refinements remains unknown, a focus on decoupling conditions in the spirit of~\cite{Pf02} has recently proved effective for making significant progress~\cite{CDEJR23w,CR23}; the present contribution follows along those lines. 

More generally, the present work is part of a broader research program~\cite{BJPP18,BCJPPExamples,CDEJR23,CDEJR23w,CR23} whose goals include promoting the efficiency of this ``decoupling perspective'' originating in statistical mechanics in revisiting long-standing problems in dynamical systems and information theory. This efficiency concerns both the reformulation of different existing proof strategies in a common language and the generation of nontrivial extensions.

\paragraph*{Organization of the paper.} The rest of the paper is organized as follows. In Section~\ref{sec:setting}, we set the stage by properly introducing our notation, objects of interest, and assumptions. In Section~\ref{sec:simplified}, we state our main result, provide its proof, and make several comments. In Section~\ref{sec:examples}, we discuss examples to which this result applies beyond Markovianity.

\section{Setting}
\label{sec:setting}

Let $\Omega\coloneqq  \{(x_k)_{k\in\nn} : x_k\in\cA\text{ for all }  k\in\nn\}$ be equipped with the $\sigma$-algebra generated by cylinders of the form~$[a] := \{x \in \Omega : x_1^n = a\}$. The \emph{shift} map $T: \Omega\to \Omega$ defined by $(Tx)_k \coloneqq x_{k+1}$ is then a measurable surjection. Let $\P$ and~$\Q$ be stationary (i.e.~$T$-invariant) probability measures  on~$\Omega$.
We set
\[ 
    \supp \P_n := \{a \in \cA^n : \P[a] > 0 \}
\]
and 
\[ 
    \supp \P := \{x \in \Omega : x_1^n \in \supp \P_n \text{ for all } n\in\nn\},
\]
and similarly for~$\Q$. The (specific) \emph{entropy} $h(\P)$ of a measure $\P$ is
\[
    h(\P):=\lim_{n\to\infty}-\frac{1}{n}\sum_{a\in \mathcal{A}^n}\P[a]\ln \P[a].
\]
Fekete's lemma ensures that this limit always exists and lies in~$[0, \ln (\#\cA)]$.
The (specific) \emph{cross entropy} of~$\Q$ with respect to~$\P$ is
\begin{equation}
\label{eq:def-cross}
    \Sc(\Q|\P):=\lim_{n\to\infty}-\frac{1}{n}\sum_{a\in\mathcal{A}^n}\Q[a]\ln \P[a],
\end{equation}
\emph{when the limit exists} in~$[0,\infty]$.
In this case, the (specific) \emph{relative entropy} $\Sr(\Q|\P)$ of $\Q$ with respect to~$\P$ is then defined as
\[
    \Sr(\Q|\P):= \Sc(\Q|\P) - h(\Q).
\]
The abstract properties of stationary measures that we will work with are the following:
\begin{description}
    \item[{ID}\namedlabel{it:ID}{ID}] A measure $\P$ is said to be \emph{immediately decoupled on its support} if there exists a nondecreasing, $o(n)$-sequence $(k_n)_{n=1}^\infty$ such that, for every~$n\in\nn$, both
    \begin{equation}
    \label{eq:ID-upper}
        \sup_{m\in\nn}\max\left\{\frac{\P[ab]}{\P[a]\P[b]} : a\in\supp \P_n, b \in \supp \P_m \right\} \leq \Exp{k_n}
    \end{equation}
    and
    \begin{equation}
    \label{eq:ID-lower}
        \inf_{m\in\nn}\min\left\{\frac{\P[ab]}{\P[a]\P[b]} : a\in\supp \P_n, b \in \supp \P_m,  ab\in\supp \P_{n+m} \right\} \geq \Exp{-k_n}.
    \end{equation}

    \item[{FE}\namedlabel{it:FE}{FE}] The $\P$-measure of cylinders is said to decay \emph{fast enough} if there exists $\gamma_+ < 0$ such that 
    \begin{equation}
        \sup_{a \in \supp \P_n} \P[a] \leq \Exp{\gamma_+ n}
    \end{equation}
    for all $n$ large enough.

    \item[{KB}\namedlabel{it:KB}{KB}] The measure $\P$ is said to satisfy \emph{Kontoyiannis' bound on waiting times} if there exist $o(n)$-sequences $(k_n)_{n=1}^\infty$ and $(\tau_n)_{n=1}^\infty$ such that
    \[
        \P\{x : W_\ell(a,x) \geq r\} 
        \leq 
        \exp \left(-\Exp{-k_\ell}\P[a]\left\lfloor \frac{r-1}{\ell+\tau_\ell}\right\rfloor\right)
    \]
    for every $\ell \in \nn$, $a \in \cA^\ell$ and $r\in\nn$.
\end{description}

Let us briefly discuss these abstract assumptions. 
First, it is straightforward to show that if $\P$ is the stationary measure for an irreducible multi-level Markov chain with positive entropy, then $\P$ satisfies~\ref{it:ID}, \ref{it:FE}, and \ref{it:KB}. Already for Markov chains, we see that only requiring the lower bound~\eqref{eq:ID-lower} when $ab$ is in the support is significant: requiring the lower bound whenever~$a$ and~$b$ are in the support (separately) would be considerably more restrictive, as this would exclude all Markov measures for which some transition probability is null.
Second, the bound \ref{it:KB} was derived in~\cite{Ko98} under a mixing assumption, but the following implication seems more natural for the classes of examples we have in mind: \ref{it:KB} will follow from \ref{it:ID} if one is willing to assume that the support of~$\P$ satisfies\,---\,as a subshift of~$\Omega$\,---\,a suitable notion of specification; see~\cite[\S\S{3.1,\,A.1,\,B.2}]{CR23} and \cite[\S{3}]{CDEJR23w}.
Third, repeated uses of~\eqref{eq:ID-lower} in \ref{it:ID} implies the following property, which naturally complements \ref{it:FE}: 
\begin{description}
    \item[SE\namedlabel{it:SE}{SE}] The $\P$-measure of cylinders is said to decay \emph{slow enough} if there exists $\gamma_- < 0$ such that 
    \[  
        \inf_{b \in \supp \P_n } \P[b] \geq \Exp{\gamma_- n}
    \]
    for all $n$ large enough.
\end{description}
These assumptions are established and discussed in the context of important classes of examples in Section~\ref{sec:examples}.

\section{The main result}
\label{sec:simplified}

\subsection{The statement and structure of the proof}
\label{ssec:proof-outline}

\begin{theorem}
\label{thm:simplified}
    Suppose that the stationary measure~$\P$ satisfies \ref{it:ID}, \ref{it:FE}, and \ref{it:KB} and that the ergodic measure~$\Q$ satisfies \ref{it:ID} and \ref{it:FE}.\footnote{In fact, as far as the decoupling of~$\Q$ is concerned, we only use~\eqref{eq:ID-upper}, and not~\eqref{eq:ID-lower}} Then,
    \[ 
        \lim_{N\to\infty} \widehat{Q}_N(y,x) = \Sc(\Q|\P)
    \]
    for almost every independent $x \sim \P$ and $y \sim \Q$.\footnote{Equivalently, the statement concerns almost every pair $(x,y)$ with respect to the product measure~$\P\otimes\Q$ on~$\Omega\times\Omega$.}
\end{theorem}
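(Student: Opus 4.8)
The plan is to reduce the convergence of $\widehat Q_N$ to a statement about the longest-match lengths $\Lambda_N$ along the ZM parsing, and then to control those via a two-sided estimate on waiting times $W_\ell$. Concretely, write $c_N = c_N(y|x)$ and let $\ell_1, \ell_2, \dots, \ell_{c_N}$ be the lengths of the successive words in the ZM parsing of $y_1^N$ relative to $x_1^N$, so that $\sum_{i=1}^{c_N}\ell_i = N$. The first identity to establish carefully (the ``edge cases'' alluded to in Section~\ref{ssec:comments}) is that, up to off-by-one corrections, $\ell_1 = \Lambda_N(y,x)$ and more generally $\ell_{i+1} = \Lambda_{N - (\ell_1+\dots+\ell_i)}(T^{\ell_1+\dots+\ell_i}y, x)$. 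The heart of the argument is then the asymptotics $\ln \Lambda_M(T^j y, x) / \ln M \to$ (something) or, more usefully, the Wyner--Ziv-type convergence $\ln W_\ell(T^j y, x)/\ell \to \Sc(\Q|\P)$ uniformly enough in the shift $j$; combined with the definition of $\Lambda$, this gives $\Lambda_M \approx (\ln M)/\Sc(\Q|\P)$. Summing the relation $\ell_i \approx (\ln(\text{remaining length}))/\Sc(\Q|\P)$ over $i$ and using $\sum \ell_i = N$ should, after a Cesàro/telescoping argument, pin down $c_N \ln N / N \to \Sc(\Q|\P)$.

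For the upper bound on $\widehat Q_N$ (equivalently, a lower bound on the typical word length $\ell_i$, i.e.\ the statement that matches are not too short), I would use \ref{it:KB}: for a fixed word $a = y_{j+1}^{j+\ell}$ with $\P[a]$ not too small, Kontoyiannis' bound says $W_\ell(a,x) \geq r$ has $\P_x$-probability at most $\exp(-\Exp{-k_\ell}\P[a]\lfloor (r-1)/(\ell+\tau_\ell)\rfloor)$, so waiting times are at most of order $1/\P[a]$ up to subexponential corrections. Since $\Q$ is ergodic and satisfies \ref{it:ID}+\ref{it:FE}, the Shannon--McMillan--Breiman-type control of $-\tfrac1\ell\ln\P[y_1^\ell]$ — available because the cross entropy limit~\eqref{eq:def-cross} exists under the decoupling hypotheses, which is itself a lemma I would prove first using the sub/super-additivity coming from \ref{it:ID} — gives $\P[y_{j+1}^{j+\ell}] = \Exp{-\ell \Sc(\Q|\P) + o(\ell)}$ for $\Q$-typical $y$. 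Feeding this into \ref{it:KB} and a Borel--Cantelli argument over $\ell$ and over the relevant shifts $j \le N$ (there are only polynomially many, which the $\ln N$ normalization absorbs) yields that with probability one, every match in the parsing has length $\ell_i \geq (1-\epsilon)(\ln N)/\Sc(\Q|\P)$ eventually, hence $c_N \leq (1+o(1)) N \Sc(\Q|\P)/\ln N$.

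For the lower bound on $\widehat Q_N$ (matches are not too long), I would instead use the decoupling of $\P$, namely \ref{it:ID} together with \ref{it:FE}, to get an upper bound on the probability that some long substring of $x_1^N$ equals a given $\Q$-typical word: by a union bound over the $\leq N$ starting positions in $x$, $\P_x\{\exists r \leq N : x_r^{r+\ell-1} = a\} \leq N \Exp{k_\cdot}\P[a]$, and again $\P[a] = \Exp{-\ell\Sc(\Q|\P)+o(\ell)}$ for $\Q$-typical $a = y_{j+1}^{j+\ell}$. Choosing $\ell$ slightly larger than $(\ln N)/\Sc(\Q|\P)$ makes this bound summable, so Borel--Cantelli (over $N$, and over the shifts $j$, with \ref{it:FE} ensuring the number of distinct words to consider is under control) forces $\ell_i \leq (1+\epsilon)(\ln N)/\Sc(\Q|\P)$ for all words in the parsing, eventually, and hence $c_N \geq (1-o(1)) N\Sc(\Q|\P)/\ln N$. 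Here one must be careful that the remaining-length $M = N - (\ell_1+\dots+\ell_i)$ entering each successive longest match stays comparable to $N$ for all but the last few words — this follows because each $\ell_i$ is only of logarithmic size while $N$ is linear, so $\ln M = \ln N + o(\ln N)$ throughout.

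The main obstacle I anticipate is the uniformity in the shift $j$: the waiting-time and match-length estimates must hold simultaneously for all starting positions $j \in \{0,1,\dots,N\}$ in the parsed string $y$, not just for $j=0$. This is what makes the almost-sure (rather than merely in-probability) statement work, and it is where the polynomial-in-$N$ union bounds must be balanced against the $\ln N$ normalization; the fact that $k_n, \tau_n = o(n)$ and the exponential decay rates $\gamma_\pm$ in \ref{it:FE}/\ref{it:SE} are strictly negative is exactly what gives the room to absorb these union bounds. A secondary technical point is establishing existence of $\Sc(\Q|\P)$ and the SMB-type concentration $\P[y_1^\ell] = \Exp{-\ell\Sc(\Q|\P)+o(\ell)}$ $\Q$-a.s.\ from the decoupling hypotheses — this should follow from an almost-subadditive ergodic theorem applied to $-\ln \P[y_1^\ell]$, with the defect controlled by $(k_n)$, but it needs the ergodicity of $\Q$ in an essential way.
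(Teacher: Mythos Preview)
Your proposal has a genuine gap in the lower bound on $\widehat{Q}_N$. You correctly estimate $\P\{\exists r \leq N : x_r^{r+\ell-1} = a\} \leq N\Exp{k_\ell}\P[a] \approx N^{-\epsilon}$ for a single word $a$ of length $\ell \approx (1+\epsilon)\ln N/\Sc(\Q|\P)$. But to conclude that \emph{every} ZM word has length at most~$\ell$, you must control all candidate starting positions $j \leq N$ in~$y$; since these positions depend on~$x$ through the parsing, they cannot be fixed in advance, and the union bound over~$j$ multiplies your estimate by a factor of order~$N$, yielding $N^{1-\epsilon}$. This is not summable in~$N$, so Borel--Cantelli does not apply and the argument delivers only convergence in probability. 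The paper isolates exactly this obstruction: its ``Lower bound~I'' is essentially your argument and is explicitly stated to give only~\eqref{eq:LBI-conc} in probability. The almost-sure statement requires the block construction of ``Lower bound~II'', where $y_1^N$ is cut into $M_N \approx N^{1-\alpha}$ blocks of size~$N^\alpha$; Proposition~\ref{prop:c-mostly-good} uses \ref{it:ID} and \ref{it:FE} for~$\Q$ to show that in all but an $\epsilon$-fraction of blocks the auxiliary words are \emph{distinct}, and this distinctness is what allows Proposition~\ref{prop:c-limited-in-good} to turn the crude $N^{-\epsilon}$ single-word bound into a stretched-exponential bound on the number of matched words per block, summable after the union over blocks and over~$N$.

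A secondary issue affects both halves of your argument: you invoke $\P[y_{j+1}^{j+\ell}] = \Exp{-\ell\Sc(\Q|\P)+o(\ell)}$ \emph{uniformly} in $j \leq N$, but the cross-entropy analogue of Shannon--McMillan--Breiman (Lemma~\ref{lem:cross-SMB}) gives this only for each fixed~$j$, not simultaneously for all $j$ up to~$N$ with $\ell$ of order~$\ln N$; the deterministic bounds from \ref{it:FE}/\ref{it:SE} are too coarse to substitute unless $\Sc(\Q|\P)$ happens to equal $-\gamma_\pm$. The paper sidesteps this entirely by building its auxiliary parsings from $\P$-probability thresholds (each word is the shortest prefix with $\P$-measure below $N^{-1\pm\epsilon}$) rather than from fixed lengths: the word lengths then vary with~$j$, but decoupling (Lemma~\ref{lem:random-Birk-sum}) relates $\sum_j \ln\P[y^{(j,N)}]$ directly to $\ln\P[y_1^N]$, so the SMB statement is invoked only once, for the full string. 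Your closing diagnosis that ``uniformity in the shift~$j$'' is the main obstacle is correct; the resolution, however, is not a sharper union bound but a change of parsing that avoids pointwise-in-$j$ control altogether. Section~\ref{ssec:comments} of the paper comments explicitly on why the direct longest-match route you outline runs into this dependence-structure difficulty.
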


Let us now provide the structure of the proof in the case where $\supp \Q \subseteq \supp \P$, postponing the more technical aspects to Sections~\ref{sec:aux-parse} and~\ref{sec:c-en}.
Throughout, 
\[ 
    \ell_- := \frac{\ln N}{-2\gamma_-}
\] 
and
\[ 
    \ell_+ := \frac{2\ln N}{-\gamma_+}.
\] 
These will serve as \textit{a priori} bounds on the lengths of the words in different auxiliary parsings. 
    \begin{description}
        \item[Upper bound.] 
        Let $\epsilon \in (0, \tfrac 12)$ and $y \in \supp \Q$ be arbitrary.
        We consider an auxiliary sequential parsing $y_1^N= y^{(1,N)} y^{(2,N)} \dotsc y^{(\widehat{c}_N,N)}$, where each word $y^{(j,N)}$ has length $\ell_{j,N}$ and is the shortest prefix of $T^{\ell_{1,N}+\dots+\ell_{j-1,N}}y_1^N$ satisfying 
        \begin{equation}
        \label{eq:ub-aux-pars-req}
            \P[y^{(j,N)}] \leq N^{-1+\epsilon},
        \end{equation} 
        where we define $\ell_{0,N}:=0$. 
        {The power is chosen in the hope that the words in this auxiliary parsing will be long enough, yet likely enough for~$\P$ that the vast majority of them find a match in $x_1^N$. To motivate this Ansatz, note that, by linearity of expectation, the expected number of times a given string of $\P$-probability $N^{-1+\epsilon}$ appears in a string of length~$N$ obtained from~$\P$ grows as $N^{\epsilon}$.}
        
        For $N$ large enough, each length $\ell_{j,N}$ is between~$\ell_-$ and~$\ell_+$, due to Properties \ref{it:FE} and \ref{it:SE}, except possibly for $\ell_{\widehat{c}_N,N}$ which need only satisfy the upper bound. In particular, $\widehat{c}_N = O(\frac{N}{\ln N})$.

        Note that, for each $j = 1,2, \dotsc, \widehat c_N$, the appearance of $y^{(j,N)}$ as a substring of $x_1^N$\,---\,written $y^{(j,N)} \in x_1^N$ in what follows\,---\,implies the presence of at most one separator of the original ZM parsing within $y^{(j,N)}$, that is 
        \[
          \P\left\{ x : \#\{ j \leq \widehat{c}_N : y^{(j,N)} \in x_1^N\} = \widehat{c}_N \right\}\leq \mathbb{P}\{x\,:\, c_N(y|x) \leq  \widehat c_N\},
        \]
        which in turn implies
        \[
            \mathbb{P}\{x\,:\, c_N(y|x) > \widehat c_N\}\leq 
            \P\left\{ x : \#\{ j \leq \widehat{c}_N : y^{(j,N)} \notin x_1^N\} > 0 \right\}.
        \]
        We show in Lemma~\ref{lem:key-prob-ub} that the probability on the right-hand side is summable in~$N$ and hence
        \begin{equation}
        \label{eq:c-leq-c-hat}
            \sum_{N=1}^\infty\P\left\{ x : c_N(y|x) > \widehat c_N \right\}<\infty.
        \end{equation}

        On the other hand, Lemma~\ref{lem:random-Birk-sum} below shows that 
        \begin{equation*}
        \label{eq:c-hat-leq-sum}
            (-1+\epsilon) (\widehat{c}_N-1)  \ln N
                = \sum_{j=1}^{\widehat{c}_N-1} \ln N^{-1+\epsilon} 
                \geq  \sum_{j=1}^{\widehat{c}_N} \ln \P[y^{(j,N)}]
                \geq  \ln \P[y_1^N] - o(N).
        \end{equation*}
        Hence,
        \begin{multline*}
        \mathbb{P}\left\{x\,:\, c_N(y|x)\ln N+\ln \mathbb{P}[y_1^N]>-\frac\epsilon{1-\epsilon}\ln \mathbb{P}[y_1^N]+\ln N + \epsilon N
            \right\} \\
            \leq
            \mathbb{P}\left\{x\,:\, c_N(y|x)\ln N>\widehat c_N\ln N \right\}
        \end{multline*}
        for all~$N$ large enough. Recall that, by Condition \ref{it:SE}, $\ln\P[y_1^N]\geq \gamma_- N$ with $\gamma_- < 0$.
        Combining this with~\eqref{eq:c-leq-c-hat}, we obtain
        \begin{align*}
            &\sum_{N=1}^\infty\mathbb{P}\left\{x\,:\, \frac{c_N(y|x)\ln N}N+\frac{\ln \mathbb{P}[y_1^N]}N>  -\frac{\epsilon}{1-\epsilon}\gamma_- + 2\epsilon
            \right\}<\infty.
        \end{align*}
        Appealing to the Borel--Cantelli lemma, using the cross entropy analogue of the Shannon--McMillan--Breiman theorem in Lemma~\ref{lem:cross-SMB}, and then taking $\epsilon\to 0$, we conclude that, for every $y\in\supp\Q$, we have
        \begin{align*}
            \limsup_{N\to\infty} \widehat{Q}_N(y,x)
            \leq {h^{\mathrm{c}}(\Q|\P)}
        \end{align*}
        for almost every $x\sim\P$.
        
        \item[Lower bound I.] Before we obtain the almost sure lower bound required for Theorem~\ref{thm:simplified}, let us summarize Ziv and Merhav's argument for proving that the lower bound holds in probability. This argument is not logically necessary for the rest of the paper, but may help the reader understand some ideas used for the almost sure version.
        Let $\epsilon \in (0, \tfrac 12)$ and $y \in \supp \Q$ be arbitrary.
        We consider an analogous auxiliary sequential parsing $y_1^N= y^{(1,N)} y^{(2,N)} \dotsc y^{(\bar{c}_N,N)}$, where each word $y^{(j,N)}$ has length $\ell_{j,N}$ and is the shortest prefix of $T^{\ell_{1,N}+\dots+\ell_{j-1,N}}y_1^N$ that has probability 
        \begin{equation}
        \label{eq:lbi-aux-pars-req}
            \P[y^{(j,N)}] \leq N^{-1-\epsilon},
        \end{equation}
        where we define $\ell_{0,N}:=0$. {The power is chosen in the hope that the words in this auxiliary parsing will be numerous enough, yet unlikely enough for~$\P$ that the vast majority of them find no match in $x_1^N$. To motivate this Ansatz, note that the expected number of times a given string of $\P$-probability $N^{-1-\epsilon}$ appears in a string of length~$N$ obtained from~$\P$ decays as $N^{-\epsilon}$.}
            
        Again, for $N$ large enough, each length $\ell_{j,N}$ in this parsing falls between~$\ell_-$ and~$\ell_+$, due to Properties \ref{it:FE} and \ref{it:SE}, except possibly for the last one, which only satisfies the upper bound. In particular, $\bar{c}_N = O(\frac{N}{\ln N})$.
        
        The correspondence between the parsing cardinalities $c_N(y|x)$ and $\bar c_N$ relies on the following observation: $c_N(y|x)$ must be at least equal to the number of words in the auxiliary parsing of $y_1^N$ that \emph{do not} appear as strings in $x_1^N$. Indeed, if a word $y^{(j,N)}$ does not appear as a substring of $x_1^N$\,---\,written $y^{(j,N)} \notin x_1^N$ in what follows\,---, then the ZM parsing has at least one separator within $y^{(j,N)}$. That is
        \[
            c_N(y|x)\geq \#\{j\,:\,y^{(j,N)}\notin x_1^N\}\geq \#\{j\leq \bar c_N-1\,:\,y^{(j,N)}\notin x_1^N\}
        \]
        and so
        \begin{align*}
            \P\left\{x\,:\, c_N(y|x)\geq(\bar c_N-1)\left(1-\epsilon\right)\right\}
            &\geq\P\left\{x\,:\, \#\{j\leq \bar c_N-1\,:\,y^{(j,N)}\in x_1^N\}\leq(\bar c_N-1)\epsilon\right\} \\
            &\geq 1-\P\left\{x: \#\{j\leq \bar c_N-1\,:\,y^{(j,N)}\in x_1^N\}> (\Bar{c}_N-1)\, \epsilon\right\}.
        \end{align*}
        One can easily show using a crude union bound and Markov's inequality that
        the appearance in~$x_1^N$ of more than an arbitrarily small proportion of all the words in the auxiliary parsing except for the last one has vanishing\,---\,but not necessarily summable\,---\,probability, and this enables us to conclude that 
        \begin{equation}
             \lim_{N\to\infty}\P\left\{x : c_N(y|x)\geq(\bar c_N-1)(1-\epsilon)\right\}=1.\label{eq:LBprob}
        \end{equation}

        Note that since, by construction, for any $j = 1,2, \dotsc, \bar c_N$ the auxiliary word $y^{(j,N)}$ has \emph{no} strict prefix with probability less than $N^{-1-\epsilon}$, the lower bound in Condition \ref{it:ID} implies that 
        $
            \P[y^{(j,N)}] \geq N^{-1-2 \epsilon}
        $
        for $N$ large enough.
        Therefore, Lemma~\ref{lem:random-Birk-sum} yields
        \begin{align}
        \label{eq:cbar-geq-sum}
            (-1-2\epsilon) \bar{c}_N \ln N-o(N)
                \leq  \sum_{j=1}^{\bar{c}_N} \ln \P[y^{(j,N)}]  \leq  \ln \P[y_1^N] + o(N), 
        \end{align}
        which together with~\eqref{eq:LBprob}  implies
        \[
            \lim_{N\to \infty}\P\left\{x : \frac{c_N(y|x)\ln N}N +\frac{\ln\P[y_1^N]}N\geq -2\epsilon\frac{\bar{c}_N\ln N}{N}-\epsilon\right\}=1,
        \]

        Thus, using Lemma~\ref{lem:cross-SMB}, the fact that $\bar{c}_N=O(\frac{N}{\ln N})$ and taking $\epsilon\to 0$, we conclude that,
        for all~$y\in\supp\Q$, we have
        \begin{equation}
        \label{eq:LBI-conc}
            h^{\mathrm{c}}(\Q|\P)
            \leq \liminf_{N\to\infty} \widehat{Q}_N(y,x)
        \end{equation}
        in probability with respect to~$x \sim \P$.
       
        \item[Lower bound II.]
        Let $\epsilon \in (0, \tfrac 12)$ and $y \in \supp \Q$ be arbitrary, and fix $0 < \alpha<\frac{\gamma_+}{8\gamma_-} < 1$. In what follows and in the last part of Section~\ref{sec:aux-parse}, the number $N^{\alpha}$ is to be understood as its integer part $\lfloor N^{\alpha}\rfloor$. We modify the auxiliary parsing of $y_1^{N}$ as follows. 
        
        First, let $y^{(1,1,N)}$ be the shortest prefix of $y_1^{N^{\alpha}}$ such that $\P[y^{(1,1,N)}] \leq N^{-1-\epsilon}$; it has length $\ell_{1,1,N}$, between~$\ell_-$ and~$\ell_+$ for $N$ large enough due to Properties \ref{it:FE} and \ref{it:SE}. Now, let $y^{(2,1,N)}$ be the  shortest prefix of $y_{\ell_{1,1,N}+1}^{N^{\alpha}}$ such that $\P[y^{(2,1,N)}] \leq N^{-1-\epsilon}$, and so on until not possible. We have parsed a first \emph{block} of size $N^\alpha$:
        \begin{equation*}
            y_1^{N^\alpha} = y^{(1,1,N)} y^{(2,1,N)} \dotsb y^{(d_{1,N},1,N)} \xi^{(1,N)}
        \end{equation*}
        where the (possibly empty) \emph{buffer} $\xi^{(1,N)}$ has probability at least $N^{-1-\epsilon}$ and length at most~$\ell_+$ due to Property~\ref{it:FE}. 
        
        We then repeat the procedure with $T^{N^\alpha}y_1^N$ to obtain the second block, and so on until 
        \begin{multline}
        \label{eq:block-parsing}
            y_1^{N} 
                = y^{(1,1,N)} y^{(2,1,N)} \dotsb y^{(d_{1,N},1,N)} \xi^{(1,N)} y^{(1,2,N)} y^{(2,2,N)} \dotsb y^{(d_{2,N},2,N)} \xi^{(2,N)} \\
                \dotsb
                y^{(1,M_N,N)} y^{(2,M_N,N)} \dotsb y^{(d_{M_N,N},M_N,N)} \xi^{(M_N,N)}.
        \end{multline}
        {The construction of $y^{(1,M_N,N)} y^{(2,M_N,N)} \dotsb y^{(d_{M_N,N},M_N,N)} \xi^{(M_N,N)}$ may differ from that of $y^{(1,s,N)} y^{(2,s,N)} \dotsb y^{(d_{s,N},s,N)} \xi^{(s,N)}$ for $s < M_N$ in that it might be the parsing of a block of a length smaller than $N^\alpha$ if there is a remainder in the division of~$N$ by $N^\alpha$.}
        Note that, for $N$ large enough, $N^{1-\alpha}\leq M_N \leq 2N^{1-\alpha}$ and $d_{s,N}\leq \frac{2N^\alpha}{\ell_-}\eqqcolon d_+$. The number of auxiliary parsed words to be considered is
        \[ 
            \Tilde{c}_N := d_{1,N} + d_{2,N} + \dotsb +d_{M_N,N}.
        \]
        It follows from the above that $\Tilde{c}_N=O(\frac{N}{\ln N})$, since $ d_{s,N}\geq\frac{N^{\alpha}}{2\ell_+}\eqqcolon d_-$ for any $s<M_N$.
        As explained in ``Lower bound~I'', $c_N(y|x)$ must be at least equal to the number of words in the auxiliary parsing of $y_1^N$ that \emph{do not} appear as strings in $x_1^N$.
        In order to control the latter, we prove below the two following technical estimates:
        \begin{itemize}
            \item Proposition~\ref{prop:c-mostly-good}: For  almost every~$y\sim\Q$, there exists $N_\epsilon(y)$ such that, for $N \geq N_\epsilon(y)$, the number of indices $s$ such that the words $y^{(1,s,N)}$,  $y^{(2,s,N)}, \dotsc, y^{(d_{s,N},s,N)}$ are \emph{not} distinct is smaller than $\epsilon M_N$.
            
            \item Proposition~\ref{prop:c-limited-in-good}: Denoting by $\mathcal{S}_{\mathrm{g}}(y_1^N)$ the set of indices~$s$ whose block of~$y_1^N$ \emph{does} consist of distinct words,
            we have
            \begin{align*}
                \P\{\#\{j : y^{(j,s,N)} \in x_1^N\} > \epsilon d_+\} \leq \ell_+^2 \Exp{\frac{\epsilon^2 \gamma_+ }{8}\frac{d_+}{\ell_+}}
            \end{align*}
            for $N$ large enough and all~$s \in \mathcal{S}_{\mathrm{g}}(y_1^N)$. This means that, with high probability, only a small fraction of the words in these ``good blocks'' can appear in~$x_1^N$ (and fail to contribute to~$c_N(y|x)$).
    \end{itemize}
        
    Therefore, even considering the worst-case scenario where all $y^{(i,s,N)}$ with $s\notin \mathcal{S}_{\mathrm{g}}(y_1^N)$ do appear in~$x_1^N$, we find that, for  almost every~$y\sim\Q$,
    \begin{align*}
        &\sum_{N = N_\epsilon(y)}^\infty \P\left\{x\,:\, c_N(y|x) <\Tilde{c}_N - 2\epsilon d_+M_N  \right\}\\
        &\hspace{1cm}\leq \sum_{N = N_\epsilon(y)}^\infty \P\left\{x\,:\, \sum_{s = 1}^{M_N} \#\left\{j : y^{(j,s,N)} \notin x_1^N\right\}< \Tilde c_N-2\epsilon d_+ M_N\right\} \\
        &\hspace{1cm}= \sum_{N = N_\epsilon(y)}^\infty \P\left\{x\,:\, \sum_{s = 1}^{M_N} \#\left\{j : y^{(j,s,N)} \in x_1^N\right\} > 2\epsilon d_+ M_N\right\} \\
        &\hspace{1cm}\leq \sum_{N = N_\epsilon(y)}^\infty \P\left\{x\,:\, \sum_{s \in \mathcal{S}_{\mathrm{g}}(y_1^N)} \#\left\{j : y^{(j,s,N)} \in x_1^N\right\}> \epsilon d_+ M_N\right\} \\
        &\hspace{1cm}\leq \sum_{N = N_\epsilon(y)}^\infty M_N \max_{s \in \mathcal{S}_{\mathrm{g}}(y_1^N)}\P\{\#\{j : y^{(j,s,N)} \in x_1^N\} > \epsilon d_+\}
    \end{align*}
    is finite.
    
    Appealing to Lemma~\ref{lem:random-Birk-sum} and Remark~\ref{rem:CEmod}, the relation \eqref{eq:cbar-geq-sum} between $\Tilde{c}_N$ and $\ln \P[y_1^N]$ remains valid and yields
    \begin{multline*}
        \P\left\{x: \frac{c_N(y|x)\ln N}N +\frac{\ln \P[y_1^N]}N<-2\epsilon \frac{\Tilde{c}_N\ln N}N-\epsilon-8\epsilon\frac{ \ln N}{\ell_-} \right\}\\
        \leq \P\left\{x: \frac{c_N(y|x)\ln N}N <\frac{\Tilde{c}_N\ln N}N - 2\epsilon d_+M_N\frac{\ln N}N \right\}
    \end{multline*}
    for $N$ large enough, which implies that there exists some constant $C=C(\gamma_-)>0$ such that
     \[
        \sum_{N=1}^\infty\P\left\{x: \frac{c_N(y|x)\ln N}N +\frac{\ln \P[y_1^N]}N< -C\epsilon \right\}<\infty.
    \]
    By Lemma~\ref{lem:cross-SMB} and the Borel--Cantelli lemma, taking $\epsilon\to 0$ we conclude that 
    \begin{align*}
        h^{\mathrm{c}}(\Q|\P) \leq \limsup_{N\to\infty} \widehat{Q}_N(y,x)
    \end{align*}
    for almost every independent $x\sim\P$ and $y\sim\Q$.
\end{description}

The above strategy is essentially that of Ziv and Merhav, but the lemmas and propositions on which it relies need to be adapted beyond Markovianity. Before we do so, let us state and prove a proposition that justifies our focus on situations where $\supp \Q \subseteq \supp \P$.

\begin{proposition}
    Suppose that $\Q$ is ergodic. If there exists $k\in\nn$ such that $\supp \Q_k \cap \supp \P_k^\mathsf{C} \neq \emptyset$, then $\widehat{Q}_N\to\infty$ almost surely as $N\to\infty$, in agreement with Theorem~\ref{thm:simplified}.
\end{proposition}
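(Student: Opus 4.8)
The plan is to single out one fixed word~$a$ that is forbidden for~$\P$ but occurs with positive frequency under the ergodic measure~$\Q$, and then to show that each occurrence of~$a$ inside~$y_1^N$ is forced to sit near a separator of the ZM parsing. Since there are, $\Q$-typically, of order~$N$ such occurrences, this will force $c_N(y|x)$ to grow linearly in~$N$ and hence $\widehat Q_N$ to grow like~$\ln N$.

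Concretely, by hypothesis there is $a\in\cA^k$ with $\Q[a]>0$ and $\P[a]=0$. First I would record that~$a$ is $\P$-negligible as a factor: since~$\P$ is $T$-invariant, $\P\{x:x_i^{i+k-1}=a\}=\P[a]=0$ for every $i\in\nn$, so by countable subadditivity $\P$-almost every~$x$ has the property that~$a$ is not a substring of~$x_1^N$ for any~$N$. Second, I would establish the purely combinatorial inequality $c_N(y|x)\ge \tfrac1k\,m_N(y)$, valid for every~$N$ and every~$x$ with $a\notin x_1^N$, where $m_N(y)$ denotes the number of occurrences of~$a$ inside~$y_1^N$. The mechanism is that any word of the ZM parsing of length at least~$2$ is, by construction, a substring of~$x_1^N$ and therefore contains no copy of~$a$; hence each of the $m_N(y)$ occurrences of~$a$ in~$y_1^N$ either is itself a one-letter ``no-match'' word (which can only happen when $k=1$) or straddles one of the $c_N(y|x)-1$ internal separators, and any given separator is straddled by at most $k-1$ occurrences of a length-$k$ word; a double count then yields the claim (and for $k=1$ one gets $c_N(y|x)\ge m_N(y)$ directly, each letter~$a$ being its own word). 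Third, since~$\Q$ is ergodic, Birkhoff's ergodic theorem applied to the indicator~$\mathbbm 1_{[a]}$ gives $N^{-1}m_N(y)\to\Q[a]>0$ for $\Q$-almost every~$y$, so that $m_N(y)\ge\tfrac12\Q[a]\,N$ for all~$N$ large enough.

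Combining these three points, for $\P\otimes\Q$-almost every~$(x,y)$ and all large~$N$ one has $\widehat Q_N(y,x)=N^{-1}c_N(y|x)\ln N\ge(2k)^{-1}\Q[a]\ln N\to\infty$, which is the assertion; and this is consistent with Theorem~\ref{thm:simplified} because the same word~$a$ forces $\Sc(\Q|\P)=\infty$, every~$n\ge k$ admitting an extension $ac\in\cA^n$ with $\Q[ac]>0$ and $\P[ac]\le\P[a]=0$, so that the $n$-th term in~\eqref{eq:def-cross} equals~$+\infty$. I expect the only genuinely delicate step to be the combinatorial one: the ZM parsing carries a few edge cases (one-letter no-match words, and the treatment of the last, possibly short, word of the parsing), and one must check that the claimed incidence between occurrences of~$a$ and separators really holds in all of them. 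Everything else is a routine combination of stationarity, subadditivity, the ergodic theorem, and the definitions.
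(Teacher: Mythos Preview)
Your proof is correct and follows essentially the same route as the paper's: fix a word $a\in\supp\Q_k\setminus\supp\P_k$, use the crude combinatorial bound $c_N(y|x)\ge k^{-1}\#\{j\le N-k+1:(T^{j-1}y)_1^k=a\}$ for $x\in\supp\P$, apply Birkhoff's ergodic theorem to $\mathbf 1_{[a]}$, and then observe that $\Sc(\Q|\P)=\infty$ because the bad word persists in all longer marginals. Your write-up simply makes the ``crude counting argument'' explicit via the separator-straddling double count and singles out the $k=1$ case, which the paper leaves implicit.
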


\begin{proof}
    Fix $k$ as in the hypothesis and then $a \in \supp \Q_k \setminus \supp \P_k$. Because $a \notin \supp \P_k$, a crude counting argument yields that the ZM parsing satisfies
    \[ 
        c_N(y|x) \geq \frac{\#\{j \leq N-k+1 :  T^{j-1}y \in [a]\}}{k}
    \]
    for all $x \in \supp \mathbb{P}$.
    Because $a \in \supp \Q_k$ and $k$ is fixed, Birkhoff's ergodic theorem applied to the function~$\mathbf{1}_{[a]}$ yields
    \[ 
        \liminf_{N\to\infty} \frac{c_N(y|x)}{N} > 0,
    \]
    for  almost every~$y\sim\Q$.
    This allows us to conclude that, almost surely, the estimator diverges.

    As for the claim that this is in agreement with Theorem~\ref{thm:simplified}, it is based on the observation that if $\supp \Q_k \cap \supp \P_k^\mathsf{C} \neq \emptyset$, then $\supp \Q_{n} \cap \supp \P_{n}^\mathsf{C} \neq \emptyset$ for all~$n \geq k$. Since the existence of $a \in \supp \Q_n$ such that $\P_n[a] = 0$ causes at least one summand to be infinite on the right-hand side of~\eqref{eq:def-cross}, this allows us to conclude that $\Sc(\Q|\P) = \infty$.
\end{proof}

\subsection{Properties of the auxiliary parsings}
\label{sec:aux-parse}

Throughout this section, $\epsilon\in(0,1/2)$ is fixed but arbitrary. We assume that $\P$ and $\Q$ are stationary
and satisfy $\supp\Q\subseteq\supp\P$. For readability, we will omit keeping track of the $N$-dependence in
some of the notation introduced above.
As foreshadowed in the introduction, our analysis of the cardinalities of the auxiliary parsings will use reformulations in terms of waiting times. 

\begin{lemma}
\label{lem:key-prob-ub}
    Suppose that $\P$ satisfies \ref{it:ID}, \ref{it:FE}, and \ref{it:KB}. Let $y \in \supp \Q$ be arbitrary and consider the auxiliary parsing of~$y_1^N$ built around the requirement~\eqref{eq:ub-aux-pars-req}. Then,
    \[ 
        \P\left\{ x : \#\{ j \leq \widehat{c}_N : y^{(j)} \notin x_1^N\} > 0 \right\} \leq N\Exp{-\frac{N^{\frac{\epsilon}{4}}}{3\ell_+}}
    \]
    for $N$ large enough.
    
\end{lemma}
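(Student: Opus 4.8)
The plan is to union-bound over the $\widehat c_N = O(N/\ln N)$ auxiliary words and, for each fixed word $a = y^{(j)}$ of length $\ell = \ell_{j,N}$, estimate the probability that $a$ fails to appear anywhere in $x_1^N$ via a waiting-time bound. The key observation is that $\{a \notin x_1^N\}$ is essentially the event $\{W_\ell(a,x) > N-\ell+1\}$, so Property \ref{it:KB} applies directly: it gives
\[
    \P\{x : W_\ell(a,x) \geq r\} \leq \exp\!\left(-\Exp{-k_\ell}\,\P[a]\left\lfloor \tfrac{r-1}{\ell+\tau_\ell}\right\rfloor\right).
\]
Taking $r = N-\ell+2$ (so that $r-1 = N-\ell+1 \geq N - \ell_+ + 1$) and using $\ell \leq \ell_+ = \tfrac{2\ln N}{-\gamma_+}$, the floor $\lfloor (r-1)/(\ell+\tau_\ell)\rfloor$ is at least of order $N/\ell_+$ for $N$ large, since $\ell_+ + \tau_{\ell} = o(N)$ and in fact $O(\ln N)$ up to the $o(n)$-correction $\tau_\ell$ (here one uses that $\tau_\ell = o(\ell)$ along $\ell \le \ell_+$, or more crudely $\tau_{\ell_+} = o(N)$, to absorb it into a constant factor).

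Next I would feed in the lower bound on $\P[a]$. By construction, each auxiliary word $y^{(j)}$ of the "upper-bound" parsing is the \emph{shortest} prefix with $\P[y^{(j)}] \leq N^{-1+\epsilon}$, so its strict prefix of length $\ell-1$ has $\P$-probability $> N^{-1+\epsilon}$; applying the lower decoupling bound \eqref{eq:ID-lower} once (splitting off the last letter, whose conditional probability is bounded below using \ref{it:ID} together with $\sup_{b\in\supp\P_1}\P[b]^{-1}$-type control, i.e. the $\Exp{-k_1}$ factor times the one-letter probabilities) yields $\P[y^{(j)}] \geq N^{-1+\epsilon}\Exp{-k_1}\min_{c\in\supp\P_1}\P[c] \geq N^{-1+\epsilon/2}$ for $N$ large. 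Hence $\Exp{-k_\ell}\P[a] \geq \Exp{-k_{\ell_+}} N^{-1+\epsilon/2} \geq N^{-1+\epsilon/4}$ for $N$ large, because $k_{\ell_+} = o(\ell_+) = o(\ln N)$, so $\Exp{-k_{\ell_+}} \geq N^{-\epsilon/4}$ eventually. Combining, the exponent in \ref{it:KB} is bounded above by $-N^{-1+\epsilon/4}\cdot c\,\tfrac{N}{\ell_+} = -c\,\tfrac{N^{\epsilon/4}}{\ell_+}$ for a constant $c$ that can be taken to be $\tfrac13$ after absorbing the various $O(1)$ and $1+o(1)$ factors. A union bound over the at most $N$ (in fact $O(N/\ln N)$, but $N$ suffices) auxiliary words then gives the claimed $N\Exp{-N^{\epsilon/4}/(3\ell_+)}$.

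The main obstacle is bookkeeping the interplay of the three $o(n)$-sequences — the decoupling $k_n$ (used both to pass from the shortest-prefix condition to a genuine lower bound on $\P[y^{(j)}]$ and, in the guise $\Exp{-k_\ell}$, inside \ref{it:KB}) and the $\tau_n$ from \ref{it:KB} — and checking that, evaluated at arguments $\ell \le \ell_+ = O(\ln N)$, each of them is small enough (sub-polynomial in $N$, indeed $o(\ln N)$) that it only costs a factor $N^{o(1)}$, which is then swallowed by choosing $\epsilon/4$ rather than $\epsilon/2$ in the exponent and by the loose constant $\tfrac13$. One must also be slightly careful with the "except possibly for $\ell_{\widehat c_N,N}$" caveat: the last word satisfies only the upper bound $\P[y^{(\widehat c_N)}] \le N^{-1+\epsilon}$, but the argument above only needs the \emph{lower} bound on $\P[y^{(j)}]$ to make $\{y^{(j)}\notin x_1^N\}$ unlikely, and for a word that is \emph{too} probable the waiting-time bound is only easier — so one either notes $\P[y^{(\widehat c_N)}] \ge N^{-1+\epsilon}\cdot(\text{stuff}) $ is not guaranteed and instead handles the last word separately (its length is still at most $\ell_+$ and at least $1$, and if $\P[y^{(\widehat c_N)}]$ happens to be large the waiting-time bound is even stronger; if it is small, the prefix argument still applies), or simply absorbs this single extra term into the union bound at negligible cost. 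A final routine check is that $N/\ln N = o(N^{\epsilon/4})$ is false — it is the other way, $N^{\epsilon/4}$ is subpolynomial relative to $N$ but superpolylogarithmic — so the product $N \cdot \Exp{-N^{\epsilon/4}/(3\ell_+)}$ with $\ell_+ = O(\ln N)$ is summable in $N$, which is what downstream \eqref{eq:c-leq-c-hat} requires; this I would remark on but not belabour.
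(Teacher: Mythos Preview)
Your plan is essentially the paper's proof: union-bound over the at most $N$ auxiliary words, use the lower decoupling bound on the one-letter-shorter prefix $\underline{y}^{(j)}$ to obtain $\P[y^{(j)}]\geq N^{-1+\epsilon/2}$, then feed this into \ref{it:KB} with $r=N-\ell_j+2$ and absorb $\Exp{-k_{\ell_+}}$ and $\tau_{\ell_+}\leq\ell_+$ into the constant $\tfrac13$. One small slip: when you split $y^{(j)}=\underline{y}^{(j)}b$, the constant in \eqref{eq:ID-lower} is $k_{|a|}=k_{\ell_j-1}$ (the length of the \emph{first} factor), not $k_1$; since $k_{\ell_j-1}\leq k_{\ell_+}=o(\ln N)$ this is still swallowed by passing from $N^{-1+\epsilon}$ to $N^{-1+\epsilon/2}$, so the argument is unaffected. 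Your aside on the last word is slightly garbled (the exceptional feature of $y^{(\widehat c_N)}$ is that it may \emph{fail} the upper bound $\P\leq N^{-1+\epsilon}$, not that it only satisfies it), but as you correctly note this only helps the waiting-time estimate, and in fact the prefix inequality $\P[\underline{y}^{(j)}]>N^{-1+\epsilon}$ holds for $j=\widehat c_N$ as well, so no separate treatment is needed.
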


\begin{proof}
    Let $\underline{y}^{(j)}$ be the word that is obtained by removing the last letter from~$y^{(j)}$; by construction, $\P[\underline{y}^{(j)}] > N^{-1+\epsilon}$. So, in view of \ref{it:ID},
    \begin{equation}
    \label{eq:use-1-of-Ad}
        \P[y^{(j)}] \geq \Exp{-k_{\ell_j-1}} \P[\underline{y}^{(j)}] \left(\min_{a \in \mathcal{L}_1} \P[a]\right) \geq N^{-1+\frac \epsilon 2}
    \end{equation}
    for $N$ large enough. We have used the fact that $k_{\ell_j-1} \leq k_{\ell_+}$ with $k_\ell = o(\ell)$ and $\ell_+=O(\ln N)$. Using \ref{it:KB} and considering all $N$ large enough, we have 
    \begin{align}
    \label{eq:use-1-of-Ad-then}
        \P\{ x\,:\,W_{\ell_j}(y^{(j)},x)>N-\ell_j+1 \}
            &\leq \exp\left(-\frac{N^{\frac{\epsilon}{4}}}{2\ell_++\tau_{\ell_+}}\right),
    \end{align}
    where we used 
    the defining properties of $k_{\ell_j}$ and $\ell_j$. Then, using that for $N$ large enough we have $\tau_{\ell_+}\leq\ell_+$ and taking a union bound over $j$ 
    \[ 
        \P\left(\bigcup_{j=1}^{\widehat c_N}\{x : W_{\ell_j}(y^{(j)},x)>N-\ell_j+1\} \right) \leq N\exp\left(-\frac{N^{\frac{\epsilon}{4}}}{3\ell_+}\right). 
    \]
    To conclude, note that $W_{\ell_j}(y^{(j)},x)>N-\ell_j+1$ is a necessary and sufficient condition for~$y^{(j)} \notin x_1^N$.
\end{proof}

While, on one hand, the last lemma states that the words in the auxiliary parsing built around~\eqref{eq:ub-aux-pars-req} tend to appear in~$x_1^N$, one can show that, on the other hand, the words in the auxiliary parsing built around~\eqref{eq:lbi-aux-pars-req} tend to \emph{not} appear in~$x_1^N$. However, the probabilistic estimate obtained pursuing this strategy only achieves convergence in probability of the ZM estimator.
%
%
%
As Ziv and Merhav showed in their original paper in the Markovian case, this estimate can actually be refined and made summable in~$N$ using some additional combinatorial and probabilistic arguments. Such a refinement is used to go from convergence in probability to almost sure convergence in Section~\ref{ssec:proof-outline}.
We recall the following basic facts about our modified auxiliary parsing~\eqref{eq:block-parsing} for $N$ large enough:
\begin{itemize}
    \item {there are $M_N \leq 2N^{1-\alpha}$ blocks, indexed by $s$, each of length $N^\alpha$ except for the last one  $(s=M_N)$ which possibly has length less than $N^\alpha$};
    \item the $s$-th block contains $d_{s}$ words $y^{(i,s)}$ with 
    \begin{align*}
        d_{-}\coloneqq\frac{N^{\alpha}}{2\ell_+} \leq d_{s} \leq\frac{2N^{\alpha}}{\ell_-} \eqqcolon d_{+},
    \end{align*}
    {except for the last one  $(s=M_N)$ for which the lower bound may not apply}, and one (possibly empty) buffer $\xi^{(s)}$; 
    \item each word $y^{(i,s)}$ has length $\ell_{i,s}$, with
    \[ 
        \ell_- := \frac{\ln N}{-2\gamma_-} \leq \ell_{i,s} \leq  \frac{2\ln N}{-\gamma_+} =: \ell_+.
    \]
\end{itemize}
Most of the factors of 2 in these facts are suboptimal; they are only meant to avoid having to consider integer parts or superficial dependence on~$\epsilon$.

\begin{definition}
    If the words $y^{(1,s)}$, $y^{(2,s)}, \dotsc, y^{(d_{s},s)}$ in~\eqref{eq:block-parsing} are all distinct, we say that the $s$-th block of $y_1^N$ is \emph{good} and write $s \in \mathcal{S}_{\mathrm{g}}(y_1^N)$. If that is not the case, we say that the block is \emph{bad} and write $s \in \mathcal{S}_{\mathrm{b}}(y_1^N)$.
\end{definition}

\begin{lemma}\label{lem:A1}
    If $\Q$ satisfies \ref{it:ID} and \ref{it:FE}, then
    \[
       \Q\{y : s \in \mathcal{S}_{\mathrm{b}}(y_1^N)\} \leq \Exp{k_{\ell_-}}N^{-2\alpha},
    \]
    for every~$s$ and every $N$ large enough. 
\end{lemma}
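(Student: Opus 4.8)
The plan is to bound the probability of a bad block by a union bound: a bad $s$-th block must contain two parsed words that coincide, and I shall bound the chance of this for each candidate common word and each pair of positions it could occupy. Concretely, if the $s$-th block of $y_1^N$ is bad there are indices $i<i'$ with $y^{(i,s)}=y^{(i',s)}=:w$; set $\ell:=|w|$ and let $p$, $q$ be the starting positions, within $y_1^N$, of $y^{(i,s)}$ and $y^{(i',s)}$. For $N$ large one has $\ell_-\le\ell\le\ell_+$ by \ref{it:FE} and \ref{it:SE}; moreover $p$ and $q$ both lie in the $s$-th block (a set of at most $N^\alpha$ consecutive positions), $q\ge p+\ell$ because within a block the parsed words are consecutive and $\ell_{i,s}=\ell$, and $y_p^{p+\ell-1}=y_q^{q+\ell-1}=w$. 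Hence, for $N$ large,
\[
\Q\{y:s\in\mathcal{S}_{\mathrm{b}}(y_1^N)\}
\le \sum_{\ell=\ell_-}^{\ell_+}\ \sum_{\substack{p,q\text{ in block }s\\ q\ge p+\ell}}\ \sum_{w\in\cA^\ell}\Q\{y:y_p^{p+\ell-1}=w,\ y_q^{q+\ell-1}=w\},
\]
and it remains to estimate the innermost probability and carry out the sums.

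The crucial and most delicate step is the estimate
\[
\Q\{y:y_p^{p+\ell-1}=w,\ y_q^{q+\ell-1}=w\}\le \Exp{k_\ell}\,\Q[w]^2\qquad(q\ge p+\ell).
\]
A naive argument would condition on $y_1^{q-1}$ and detach the second copy of $w$ via \ref{it:ID}, but the resulting decoupling cost $\Exp{k_{q-1}}$ is useless because $q-1$ can be of order $N^\alpha$ and only $k_n=o(n)$ is available. Instead I would use $T$-invariance of $\Q$ to rewrite the event as $\{y:y_1^\ell=w,\ y_{r+1}^{r+\ell}=w\}$ with $r:=q-p\ge\ell$, expand over the intervening word $v=y_{\ell+1}^r\in\cA^{r-\ell}$, and apply \eqref{eq:ID-upper} \emph{with the short word $w$ as the first factor}, $\Q[wvw]\le\Exp{k_\ell}\Q[w]\Q[vw]$; summing over $v$ and using $T$-invariance once more in the form $\sum_{v\in\cA^{r-\ell}}\Q[vw]=\Q[w]$ yields the claim. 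The point is that the single invocation of \ref{it:ID} is made with a word of length $\ell\le\ell_+=O(\ln N)$, so the cost is only $k_\ell\le k_{\ell_+}=o(\ln N)$, not $k_{N^\alpha}$.

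It then remains to sum. By \ref{it:FE} for $\Q$ (applicable since $\ell\ge\ell_-\to\infty$), $\sum_{w\in\cA^\ell}\Q[w]^2\le\max_{w\in\cA^\ell}\Q[w]\le\Exp{\gamma_+\ell}$; combining this with the crude count of at most $N^{2\alpha}$ pairs $(p,q)$, with $k_\ell\le k_{\ell_+}$, and with the convergent geometric estimate $\sum_{\ell\ge\ell_-}\Exp{\gamma_+\ell}\le C_0\,\Exp{\gamma_+\ell_-}$ (where $C_0=(1-\Exp{\gamma_+})^{-1}$), one obtains, for $N$ large,
\[
\Q\{y:s\in\mathcal{S}_{\mathrm{b}}(y_1^N)\}\le C_0\,\Exp{k_{\ell_+}}\,N^{2\alpha}\,\Exp{\gamma_+\ell_-}=C_0\,\Exp{k_{\ell_+}}\,N^{2\alpha-\gamma_+/(2\gamma_-)}.
\]
Finally, the constraint $0<\alpha<\frac{\gamma_+}{8\gamma_-}$ fixed in ``Lower bound~II'' gives $4\alpha<\frac{\gamma_+}{2\gamma_-}$, so $N^{2\alpha-\gamma_+/(2\gamma_-)}=N^{-2\alpha}\cdot N^{4\alpha-\gamma_+/(2\gamma_-)}$ with the second factor tending to $0$; since $\Exp{k_{\ell_+}}=N^{o(1)}$, the right-hand side above is at most $N^{-2\alpha}$, hence at most $N^{-2\alpha}\Exp{k_{\ell_-}}$, once $N$ is large enough. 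The bottleneck is the localization trick in the crucial step; the remaining steps are routine.
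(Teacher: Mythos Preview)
Your proof is correct and follows essentially the same strategy as the paper's: a union bound over pairs of starting positions within the block, shift-invariance to bring the first occurrence to the origin, a single application of the upper bound in \ref{it:ID} with the short word as first factor (so the cost is $k_\ell$ rather than $k_{N^\alpha}$), and then \ref{it:FE} to control $\sum_w \Q[w]^2$. The paper streamlines one step: rather than summing over all possible common lengths $\ell\in[\ell_-,\ell_+]$, it notes that two coinciding parsed words must in particular have coinciding $\ell_-$-prefixes, and works at the single fixed length~$\ell_-$. This yields the constant $\Exp{k_{\ell_-}}$ directly and avoids both the geometric sum in~$\ell$ and your final absorption argument (using $\Exp{k_{\ell_+}}=N^{o(1)}$ and the slack $4\alpha<\gamma_+/(2\gamma_-)$) to pass from $C_0\,\Exp{k_{\ell_+}}$ back to $\Exp{k_{\ell_-}}$.
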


\begin{proof} 
    Fix $\Q$ as in the statement.
    By shift-invariance, $\Q\{y : s \in \mathcal{S}_{\mathrm{b}}(y_1^N) \} \leq \Q\{y : 1 \in \mathcal{S}_{\mathrm{b}}(y_1^N)\}$.\footnote{In fact, as long as $s < M_N$, the probabilities are equal. }
    For the first block to be bad, two words $y^{(i,1)}$ and $y^{(j,1)}$  need to coincide, and in particular, their $\ell_-$-prefixes need to coincide. Hence, considering all possible starting indices of these two words, and appealing to shift-invariance, \ref{it:ID} and \ref{it:FE}, we derive
    \begin{align*}
       \Q\{y : 1 \in \mathcal{S}_{\mathrm{b}}(y_1^N)\}
            &\leq \sum_{r = \ell_-}^{N^{\alpha}} \sum_{r' = 0}^{r-\ell_-} \sum_{u \in \supp \Q_{\ell_-}} \Q(T^{-r'}[u] \cap T^{-r}[u])\\
            &\leq \binom{N^{\alpha}}{2}\sum_{u \in \supp \Q_{\ell_-}} \Exp{k_{\ell_-}} \Q[u]^2\\
            &\leq N^{2\alpha}\Exp{k_{\ell_-}}\Exp{\gamma_+ \ell_-}.
    \end{align*}   
    To conclude, recall that we have chosen $\alpha<\frac{\gamma_+}{8\gamma_-} = - \frac{\gamma_+\ell_-}{4\ln N}$ and that $\gamma_\pm < 0$. 
\end{proof}

\begin{lemma}
\label{lem:m-bad-blocks}
    If $\Q$ satisfies \ref{it:ID} and \ref{it:FE}, then 
    \[
        \Q\{y : \#\mathcal{S}_{\mathrm{b}}(y_1^N) = m \}\leq \binom{M_N}{m}\Exp{2mk_{\ell_-}}N^{-2m\alpha},
    \]
   for all~$m\in\nn$ and for all $N$ large enough.
\end{lemma}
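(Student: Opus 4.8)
The plan is to reduce the event $\{\#\mathcal{S}_{\mathrm{b}}(y_1^N) = m\}$ to a union over all possible choices of which $m$ blocks are the bad ones, and then bound the probability of each such configuration by a product of the single-block bounds from Lemma~\ref{lem:A1}. The subtle point is that badness of distinct blocks is not independent under $\Q$, so I cannot simply multiply probabilities; instead I would exploit the decoupling assumption~\eqref{eq:ID-upper} for $\Q$ to decorrelate the badness events across blocks.

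Concretely, first I would write
\[
    \Q\{y : \#\mathcal{S}_{\mathrm{b}}(y_1^N) = m\}
    \leq \sum_{\substack{S \subseteq \{1,\dots,M_N\} \\ \#S = m}} \Q\{y : S \subseteq \mathcal{S}_{\mathrm{b}}(y_1^N)\},
\]
so that it suffices to show $\Q\{y : S \subseteq \mathcal{S}_{\mathrm{b}}(y_1^N)\} \leq \Exp{2mk_{\ell_-}} N^{-2m\alpha}$ uniformly in $S$ with $\#S = m$; the binomial factor $\binom{M_N}{m}$ then appears from counting the subsets. For a fixed $S = \{s_1 < s_2 < \dots < s_m\}$, the event $\{s_i \in \mathcal{S}_{\mathrm{b}}(y_1^N)\}$ depends only on the letters of $y_1^N$ lying in the $s_i$-th block, i.e.\ on a window of coordinates of length $N^\alpha$ (the portion of $y_1^N$ that gets parsed into that block). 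These windows for different $s_i$ are disjoint and ordered. I would then peel off the blocks one at a time from the left: using~\eqref{eq:ID-upper} for $\Q$ to split off the coordinates up to and including the $s_1$-th block from the rest, at the cost of a factor $\Exp{k_n}$ where $n$ is the length of the prefix — but since $k_n = o(n)$ and we only ever need $n \leq N$, this would give a factor $\Exp{o(N)}$ per split, which is too lossy.

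So the main obstacle, and the point requiring care, is getting a \emph{block-local} decoupling cost rather than one scaling with the full prefix length. The resolution should mirror the proof of Lemma~\ref{lem:A1}: there, the badness of a single block was controlled by a union bound over pairs of starting positions within that block together with the event that two length-$\ell_-$ words coincide, and \ref{it:ID} was applied only to decouple a word of length $\ell_-$ from what follows, incurring a factor $\Exp{k_{\ell_-}}$. I would run the same argument simultaneously for all $m$ chosen blocks: union-bound over the $m$ pairs $(r'_i, r_i)$ of starting positions (one pair per chosen bad block), and over the $m$ length-$\ell_-$ words $u_i \in \supp\Q_{\ell_-}$ that must repeat within block $s_i$; the event in question is then an intersection of $2m$ cylinder events on disjoint coordinate windows, and repeated application of~\eqref{eq:ID-upper} — each time splitting off a leading word of length $\ell_-$ — factorizes the $\Q$-probability into $\prod_i \Q[u_i]^2$ up to a total multiplicative cost $\Exp{2m k_{\ell_-}}$ (two uses of the decoupling per repeated word, as in Lemma~\ref{lem:A1}). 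Summing over the $u_i$ using \ref{it:FE} ($\sum_{u}\Q[u]^2 \leq \Exp{\gamma_+\ell_-}$) and over the position pairs ($\leq \binom{N^\alpha}{2}^m \leq N^{2m\alpha}$), and invoking the choice $\alpha < \gamma_+/(8\gamma_-) = -\gamma_+\ell_-/(4\ln N)$ exactly as at the end of Lemma~\ref{lem:A1}, yields $\Q\{y : S \subseteq \mathcal{S}_{\mathrm{b}}(y_1^N)\} \leq \Exp{2m k_{\ell_-}} N^{-2m\alpha}$, and the claimed bound follows after multiplying by $\binom{M_N}{m}$. The only thing to double-check is that the decoupling~\eqref{eq:ID-upper} does indeed allow successively splitting off leading blocks $ab$ with $a$ of length $\ell_-$ even when there is filler between the repeated words and between consecutive chosen blocks — but this is fine since \ref{it:ID} bounds $\Q[ab]/(\Q[a]\Q[b])$ for \emph{arbitrary} suffixes $b$, so one absorbs all intermediate coordinates into $b$ at each step.
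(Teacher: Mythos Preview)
Your proposal is correct and follows essentially the same route as the paper. Both arguments reduce to a union bound over the $\binom{M_N}{m}$ choices of bad-block indices, then for a fixed choice union-bound over pairs of starting positions and repeated $\ell_-$-words within each block, and use~\eqref{eq:ID-upper} to peel off length-$\ell_-$ prefixes so that the total decoupling cost is $\Exp{2mk_{\ell_-}}$ rather than something scaling with~$N$. The only cosmetic difference is that the paper organizes this as a recursion $\Q(E'_{k+1}) \leq \binom{N^\alpha}{2}\Exp{2k_{\ell_-}}\sum_u \Q[u]^2\,\Q(E'_k)$, whereas you factorize all $2m$ cylinders in one shot; these are the same computation. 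One small inaccuracy: your parenthetical ``two uses of the decoupling per repeated word, as in Lemma~\ref{lem:A1}'' slightly misrepresents Lemma~\ref{lem:A1}, which uses only \emph{one} application of~\eqref{eq:ID-upper} for a single block (the second $\Q[u]$ there comes for free from shift invariance). The second application per block is genuinely new here, needed to separate the trailing copy of~$u_i$ from the event involving the remaining blocks---but your overall count of $2m$ applications and the resulting bound are correct.
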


\begin{proof}
    Fix $\Q$ and $m$ as in the statement. Let us first consider the probability that the blocks of $y_1^N$ labeled $s_m$, $s_{m-1}$ down to $s_1$ are bad. This event can be thought of as $m$-th in a sequence of events defined inductively by 
    $E'_{k+1}=T^{-N^{\alpha}(s_{k+1}-1)}\{1 \in \mathcal{S}_{\mathrm{b}}\} \cap E'_k$
    where $E'_0=\Omega$. It follows, by a straightforward adaptation of the strategy of Lemma~\ref{lem:A1}, that 
    \[
        \Q(E'_{k+1})\leq \binom{N^{\alpha}}{2}\sum_{u \in \supp \Q_{\ell_-}}\Exp{2k_{\ell_-}}\Q[u]^2\Q(E'_k)\leq N^{2\alpha}\Exp{2k_{\ell_-}}\max_{u \in \supp \Q_{\ell_-}}\Q[u]\Q(E'_k). 
    \]
    Iterating and accounting for the different choices of $s_1, \dotsc, s_{m-1}, s_m$ (recall that $s \leq M_N$) gives the proposed bound.
\end{proof}

\begin{proposition}
\label{prop:c-mostly-good}
   If $\Q$ satisfies~\ref{it:ID} and \ref{it:FE}, then  
   for  almost every~$y\sim\Q$, there exists $N_\epsilon$ such that $\#\mathcal{S}_{\mathrm{b}}(y_1^N) < \epsilon M_N$ for all $N \geq N_\epsilon$.
\end{proposition}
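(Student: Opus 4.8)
The plan is to deduce Proposition~\ref{prop:c-mostly-good} from Lemma~\ref{lem:m-bad-blocks} via a first Borel--Cantelli argument. Fix $\epsilon \in (0,\tfrac12)$. It suffices to show that
\[
    \sum_{N} \Q\{y : \#\mathcal{S}_{\mathrm{b}}(y_1^N) \geq \epsilon M_N\} < \infty,
\]
since then, for $\Q$-almost every $y$, only finitely many $N$ satisfy $\#\mathcal{S}_{\mathrm{b}}(y_1^N) \geq \epsilon M_N$, and setting $N_\epsilon(y)$ to be one more than the largest such $N$ (or any threshold past which the large-$N$ hypotheses of the earlier lemmas also hold) gives the claim.

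First I would bound the $N$-th summand by summing the estimate of Lemma~\ref{lem:m-bad-blocks} over $m \geq \lceil \epsilon M_N \rceil$:
\[
    \Q\{y : \#\mathcal{S}_{\mathrm{b}}(y_1^N) \geq \epsilon M_N\}
        \leq \sum_{m \geq \epsilon M_N} \binom{M_N}{m}\Exp{2mk_{\ell_-}}N^{-2m\alpha}.
\]
Using $\binom{M_N}{m} \leq 2^{M_N}$ and, crucially, that $k_{\ell_-} = o(\ell_-) = o(\ln N)$ so that $\Exp{2k_{\ell_-}} N^{-2\alpha} = N^{-2\alpha + o(1)} \leq N^{-\alpha}$ for $N$ large, each term in the $m$-sum is at most $(N^{-\alpha})^m$, and the geometric tail from $m = \lceil \epsilon M_N\rceil$ is at most $2 N^{-\alpha \epsilon M_N}$. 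Altogether the $N$-th summand is bounded by $2^{M_N + 1} N^{-\alpha\epsilon M_N}$. Since $M_N \geq N^{1-\alpha} \to \infty$, we have $N^{\alpha \epsilon M_N} = \Exp{\alpha\epsilon M_N \ln N}$, which dominates $2^{M_N+1} = \Exp{(M_N+1)\ln 2}$ because $\alpha\epsilon \ln N \to \infty$ while $\ln 2$ is constant; so for $N$ large the summand is, say, at most $\Exp{-\tfrac12 \alpha\epsilon M_N \ln N} \leq \Exp{-\tfrac12 \alpha\epsilon N^{1-\alpha}\ln N}$, which is summable in $N$. Hence the series converges and Borel--Cantelli applies.

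I do not expect any serious obstacle here: the only points requiring a little care are (i) making sure the bound of Lemma~\ref{lem:m-bad-blocks} is summed over the correct range of $m$ and that the geometric series is legitimately bounded once $N$ is large enough that $\Exp{2k_{\ell_-}}N^{-2\alpha} < 1$ (equivalently $< N^{-\alpha}$), and (ii) handling the binomial coefficients — either by the crude $\binom{M_N}{m}\leq 2^{M_N}$ as above, or more tightly via $\binom{M_N}{m} \leq (\mathrm{e} M_N/m)^m \leq (\mathrm{e}/\epsilon)^m$ for $m \geq \epsilon M_N$, which also folds cleanly into the geometric bound. Either way, the super-exponential decay in $M_N$ (because of the genuine $\ln N$ gain per bad block, against which both $2^{M_N}$ and $(\mathrm e/\epsilon)^{M_N}$ are negligible) makes the sum converge comfortably. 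The statement that the conclusion holds ``for $N \geq N_\epsilon$'' is then automatic from Borel--Cantelli, with $N_\epsilon(y)$ also absorbing the finitely many small $N$ excluded by the ``$N$ large enough'' clauses of Lemmas~\ref{lem:A1} and~\ref{lem:m-bad-blocks}.
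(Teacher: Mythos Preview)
Your proof is correct and follows essentially the same path as the paper: bound $\Q\{\#\mathcal{S}_{\mathrm{b}}(y_1^N) \geq \epsilon M_N\}$ via Lemma~\ref{lem:m-bad-blocks}, show summability in~$N$, and apply Borel--Cantelli. The only cosmetic difference is that the paper packages the tail bound as a Chernoff-type estimate (Markov's inequality applied to $\Exp{b\,\#\mathcal{S}_{\mathrm{b}}}$, then the binomial theorem, optimized at $b=2\alpha\ln N-2k_{\ell_-}$), whereas you sum the tail directly with the crude bound $\binom{M_N}{m}\leq 2^{M_N}$; both routes give the same super-exponential decay $\Exp{-c\,M_N\ln N}$.
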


\begin{proof}
     Fixing $\Q$ as in the statement, using Markov's inequality, the binomial theorem and Lemma~\ref{lem:m-bad-blocks}, for every $b>0$ we have
    \begin{align*}
         \Q\left\{y: \#\mathcal{S}_{\mathrm{b}}(y_1^N)  \geq \epsilon M_N\right\}&\leq \mathbb{E}\left(\Exp{b(\#\mathcal{S}_{\mathrm{b}}(y_1^N)) }\right)\Exp{-b\epsilon M_N}\\
         &=\Exp{-b\epsilon M_N}\sum_{m=1}^{M_N}\Exp{bm}\Q\left\{y: \#\mathcal{S}_{\mathrm{b}}(y_1^N) = m \right\}\\
         &\leq\Exp{-b\epsilon M_N}\left(1+\Exp{b+2k_{\ell_-}}N^{-2\alpha}\right)^{M_N}.
    \end{align*}
    Choosing $b=2\alpha\ln N-2k_{\ell_-}$, recalling that $M_N/N^{1-\alpha}\in (1,2)$ and considering $N$ large enough so that $b>0$ gives the bound 
    \begin{align}
       \label{eq:Q-bound}
       \Q\left\{y: \#\mathcal{S}_{\mathrm{b}}(y_1^N) \geq \epsilon M_N\right\}\leq \Exp{-N^{1-\alpha}(2\alpha\epsilon\ln N-2\epsilon k_{\ell_-}-2\ln 2)}.
    \end{align}
    The proposition thus follows from the Borel--Cantelli lemma.
\end{proof}

\begin{lemma}
\label{lem:m-words-in-lk}
    Suppose that $\P$ satisfies \ref{it:ID} and that the $s$-th block of $y_1^N$ is good. 
    Given $\ell$ and $K\in\{1,2\dotsc, \ell\}$,
    \begin{multline*}
        \P\left\{x:\#\left\{j :
        y^{(j,s)}=x_{K+r\ell}^{K+r\ell+(\ell-1)}\text{ for some } 
        r \in \left\{0, 1, \dots,\left\lfloor\frac{N-K+1}{\ell}\right\rfloor-1\right\}
        \right\}=m\right\}
        \\ \leq \binom{d_+}{m}\Exp{mk_{\ell_+}}N^{-m\epsilon}.
    \end{multline*}

\end{lemma}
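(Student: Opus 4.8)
The plan is to control the event by a two-level union bound\,---\,over which $m$ of the words $y^{(1,s)},\dots,y^{(d_{s},s)}$ get matched, and over where those matches occur\,---\,and then to estimate each resulting probability with the decoupling upper bound~\eqref{eq:ID-upper}.

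First I would note that a window $x_{K+r\ell}^{K+r\ell+(\ell-1)}$ has length exactly $\ell$, so only words $y^{(j,s)}$ with $\ell_{j,s}=\ell$ can contribute; let $J$ be the set of such indices, so the cardinality in the statement is at most $\#J\le d_{s}\le d_+$ and every contributing index lies in $J$. If this cardinality equals $m$, there is an $m$-element subset $A\subseteq J$ all of whose words are matched; and this is the one place goodness of the block is used: the words $\{y^{(j,s)}:j\in A\}$ are then pairwise distinct, hence the $m$ matches occur at $m$ distinct window positions. Writing $R:=\lfloor\frac{N-K+1}{\ell}\rfloor$, this shows
\[
  \Bigl\{x:\#\{\,j:\dots\,\}=m\Bigr\}\subseteq\bigcup_{\substack{A\subseteq J\\ \#A=m}}\ \bigcup_{\sigma}\ \bigcap_{j\in A}\Bigl\{x:x_{K+\sigma(j)\ell}^{K+\sigma(j)\ell+(\ell-1)}=y^{(j,s)}\Bigr\},
\]
the inner union being over injections $\sigma:A\to\{0,1,\dots,R-1\}$. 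There are at most $\binom{d_+}{m}$ choices of $A$ and at most $R^m\le N^m$ choices of $\sigma$.

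Next I would estimate, for fixed $A=\{j_1<\dots<j_m\}$ and $\sigma$, the probability that $m$ prescribed, pairwise non-overlapping length-$\ell$ windows of $x$ equal the prescribed words $w_i:=y^{(j_i,s)}$. By stationarity this probability equals a sum $\sum\P[w_1v_1w_2v_2\cdots v_{m-1}w_m]$ over filler words $v_i$ whose lengths are the (possibly zero) gaps between consecutive windows. I would then peel off one window at a time: applying~\eqref{eq:ID-upper} with the length-$\ell$ prefix $w_1$ produces a factor $\Exp{k_\ell}\P[w_1]$, after which $\sum_{v_1}\P[v_1w_2\cdots w_m]=\P[w_2\cdots w_m]$ by stationarity, and one iterates (the terms with $w_1v_1\cdots w_m\notin\supp\P$ vanish, so \eqref{eq:ID-upper} always applies legitimately). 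Since all words of the block have length in $[\ell_-,\ell_+]$ and $(k_n)$ is nondecreasing and nonnegative, this yields a bound $\Exp{(m-1)k_\ell}\prod_{i}\P[w_i]\le\Exp{mk_{\ell_+}}\prod_{i}\P[w_i]$; and because each $y^{(j_i,s)}$ was chosen in the auxiliary parsing~\eqref{eq:block-parsing} to satisfy $\P[y^{(j_i,s)}]\le N^{-1-\epsilon}$, we get $\prod_i\P[w_i]\le N^{-m(1+\epsilon)}$.

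Assembling the three estimates,
\[
  \P\Bigl\{x:\#\{\,j:\dots\,\}=m\Bigr\}\le\binom{d_+}{m}\cdot N^m\cdot\Exp{mk_{\ell_+}}\cdot N^{-m(1+\epsilon)}=\binom{d_+}{m}\Exp{mk_{\ell_+}}N^{-m\epsilon},
\]
as claimed, using $R^m N^{-m(1+\epsilon)}=(R/N)^m N^{-m\epsilon}\le N^{-m\epsilon}$. The only genuinely delicate point is the combinatorial bookkeeping of the first step\,---\,isolating the role of goodness (distinct words $\Rightarrow$ distinct windows) so that the union is over \emph{injections} and the number of choices is exactly $\binom{d_+}{m}R^m$; the iterated decoupling in the second step is routine, the accumulated factor $\Exp{mk_{\ell_+}}$ being harmless downstream since $k_n=o(n)$ and $\ell_+=O(\ln N)$.
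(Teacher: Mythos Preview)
Your argument is correct and follows essentially the same route as the paper's proof: both perform a union bound over the $\binom{d_+}{m}$ subsets of indices and over the at most $R^m\le N^m$ choices of distinct matching positions (the paper writes this as $m!\binom{R}{m}$, i.e.\ ordered $m$-tuples, which is the same count), then apply the upper bound in~\ref{it:ID} together with shift-invariance to factor the intersection probability and finish with $\P[y^{(j,s)}]\le N^{-1-\epsilon}$. Your peeling argument spells out the iterated use of~\eqref{eq:ID-upper} a bit more explicitly than the paper, which simply invokes ``\ref{it:ID}, shift-invariance and subadditivity'' and bounds by $\big(\Exp{k_{\ell_+}}\max_{i\in I}\P[y^{(i,s)}]\big)^m$, but the content is the same.
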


\begin{proof}
    By shift invariance, we can assume that $s=1$.
    Consider a set $I=\{i_k\}_{k=1}^m$ of $m$ distinct indices such that $y^{(i_k,1)}$ has length $\ell$, and let $F(I)$ denote the event that all the words $\{y^{(i_k,1)}\}_{k=1}^m$ have a match in $x_1^N$ with a starting point equivalent to $K$ mod~$\ell$. Since the words $\{y^{(i_k,1)}\}_{k=1}^m$ are distinct, the starting positions of the matches considered must be distinct. 
    Moreover, by assumption, each such starting position is of the form~$r\ell + K$ for some~$r$ at most $\lfloor\tfrac{N-K+1}{\ell}\rfloor - 1$. Therefore, enumerating all possibilities, we find
    \begin{align*}
        F(I)
            \subseteq \bigcup_{r_1,\dotsc,r_{m}}\bigcap_{k=1}^{m}T^{-r_k\ell-K }[y^{(i_{k},\ell)}],
    \end{align*}
    where the union is taken over distinct nonnegative integers $r_1, \dotsc, r_m$ all at most $\lfloor\tfrac{N-K+1}{\ell}\rfloor-1$.
    Using \ref{it:ID}, shift-invariance and subadditivity gives
    \[
        \P(F(I))\leq m!\binom{\lfloor \frac{N-K+1}{\ell}\rfloor-1}{m}\left(\Exp{k_{\ell_+}}\max_{i \in I}\P[y^{(i,\ell)}]\right)^m\leq N^m(\Exp{k_{\ell_+}}N^{-1-\epsilon})^m\leq \Exp{mk_{\ell_+}}N^{-m\epsilon}.
    \]
    To conclude, we use a union bound, together with an upper bound on the number of sets~$I$ of this nature.
\end{proof} 

\begin{remark} 
    The separation into fixed values of~$\ell$ and~$K$ is a technical device to avoid overlaps that would prevent the use of~\ref{it:ID}, and will be taken care of momentarily by a union bound.
    For fixed~$\ell$, and
    for the purpose of relating $c_N$ and $\Tilde{c}_N$, the important quantity is the number of~$j$ such that $y^{(j,s)}$ has size~$\ell$ and appears in $x_1^N$ (this is the only way a separator could fail to appear within $y^{(j,s)}$), and not the number of substrings of size~$\ell$ in~$x_1^N$ that are matches for some $y^{(j,s)}$. The probability of the latter is easier to control (this is what we control in the proof), and coincides with the former when $s \in \mathcal{S}_{\mathrm{g}}(y_1^N)$. 
\end{remark}

\begin{proposition}  
\label{prop:c-limited-in-good}
    Let $y \in \supp \Q$ be arbitrary and consider the modified auxiliary parsing of~$y_1^N$ in~\eqref{eq:block-parsing}. Suppose that $\P$ satisfies \ref{it:ID} and that the $s$-th block of $y_1^N$ is good. Then, for $N$ large enough, the event that more than a fraction $\epsilon$ of the maximum number $d_+$ of words $y^{(i,s)}$ in the $s$-th block appears in $x_1^N$ satisfies
    \begin{equation}
    \label{eq:P-prob-bound}
       \P\{x:\#\{j : y^{(j,s)} \in x_1^N\} > \epsilon d_+\}\leq \ell_+^2\Exp{ \frac{\gamma_+\epsilon^2}8\frac{d_+}{\ell_+}}.
    \end{equation}
\end{proposition}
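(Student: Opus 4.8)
The plan is to reduce the statement to a union bound over the finitely many values of the word-length $\ell$ and the residue class $K \bmod \ell$, and then apply Lemma~\ref{lem:m-words-in-lk} together with a tail estimate for the resulting (essentially binomial-type) random variables. First I would observe that if $y^{(j,s)} \in x_1^N$, then $y^{(j,s)}$ appears starting at \emph{some} position $p \in \{1, \dots, N - \ell_{j,s} + 1\}$, and each such position lies in exactly one residue class modulo $\ell_{j,s}$. Since the possible lengths $\ell_{j,s}$ range over the integers in $[\ell_-, \ell_+]$, of which there are at most $\ell_+ - \ell_- + 1 \leq \ell_+$, and for each length $\ell$ there are $\ell \leq \ell_+$ residue classes $K \in \{1, \dots, \ell\}$, the event $\{\#\{j : y^{(j,s)} \in x_1^N\} > \epsilon d_+\}$ is contained in the union, over these at most $\ell_+^2$ pairs $(\ell, K)$, of the events that more than $\epsilon d_+ / \ell_+^2$ of the words $y^{(j,s)}$ of length exactly $\ell$ have a match in $x_1^N$ starting at a position $\equiv K \pmod \ell$. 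Wait — to get the clean exponent in~\eqref{eq:P-prob-bound} I would instead bound the count in each $(\ell,K)$-class by $\epsilon d_+$ directly after noting that for the union to exceed $\epsilon d_+$ at least one class must individually exceed $\epsilon d_+ / \ell_+^2$; since $\ell_+^2$ is only polylogarithmic in $N$ while $d_+ / \ell_+$ is polynomial, it is cleaner to just run the large-deviation bound at threshold $\epsilon d_+$ and absorb the $\ell_+^2$ prefactor, which is exactly what the target inequality displays.

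The key computation is then the tail bound for a single $(\ell, K)$-class. Let $Z_{\ell,K}$ denote the number of $j$ (among those with $\ell_{j,s} = \ell$, of which there are at most $d_+$) such that $y^{(j,s)}$ matches $x$ at a position $\equiv K \pmod \ell$. Lemma~\ref{lem:m-words-in-lk} gives $\P\{Z_{\ell,K} = m\} \leq \binom{d_+}{m} \Exp{m k_{\ell_+}} N^{-m\epsilon}$ for every $m$. I would sum this over $m \geq \epsilon d_+$: since $\binom{d_+}{m} \leq 2^{d_+}$ and, for $N$ large, $\Exp{k_{\ell_+}} N^{-\epsilon} \leq N^{-\epsilon/2} \leq \tfrac12$ (using $k_{\ell_+} = o(\ell_+) = o(\ln N)$), the series is dominated geometrically and
\[
    \P\{Z_{\ell,K} \geq \epsilon d_+\} \leq \sum_{m \geq \epsilon d_+} 2^{d_+} N^{-m\epsilon/2} \leq 2 \cdot 2^{d_+} N^{-\epsilon^2 d_+ / 2} = 2 \Exp{(\ln 2) d_+ - \tfrac{\epsilon^2}{2} (\ln N) d_+}.
\]
Recalling $\ell_+ = 2\ln N / (-\gamma_+)$, so that $\ln N = -\gamma_+ \ell_+ / 2$, the exponent becomes $(\ln 2) d_+ + \tfrac{\gamma_+ \epsilon^2}{4} \frac{d_+}{\ell_+} \cdot \ell_+$... let me be careful: $-\tfrac{\epsilon^2}{2}(\ln N) d_+ = \tfrac{\epsilon^2}{2} \cdot \tfrac{\gamma_+}{2} \ell_+ d_+ = \tfrac{\gamma_+ \epsilon^2}{4} \ell_+ d_+$, which for $N$ large dominates the $O(d_+)$ term and in fact is comfortably below $\tfrac{\gamma_+\epsilon^2}{8}\frac{d_+}{\ell_+}$ once one checks $d_+/\ell_+^2 \to \infty$ (true since $d_+ \asymp N^\alpha/\ln N$ and $\ell_+ \asymp \ln N$). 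So after a harmless adjustment of constants and of the ``$N$ large enough'' threshold, $\P\{Z_{\ell,K} \geq \epsilon d_+\} \leq \Exp{\tfrac{\gamma_+\epsilon^2}{8}\frac{d_+}{\ell_+}}$.

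Finally I would assemble: by the union bound over the at most $\ell_+^2$ pairs $(\ell,K)$,
\[
    \P\{\#\{j : y^{(j,s)} \in x_1^N\} > \epsilon d_+\} \leq \ell_+^2 \max_{\ell, K} \P\{Z_{\ell,K} \geq \epsilon d_+ \} \leq \ell_+^2 \Exp{\tfrac{\gamma_+\epsilon^2}{8}\frac{d_+}{\ell_+}},
\]
which is~\eqref{eq:P-prob-bound}. The one genuine subtlety — the main obstacle — is the bookkeeping in the first reduction: a word $y^{(j,s)}$ may admit several matches in $x_1^N$, possibly in different residue classes, so one must be careful that ``$y^{(j,s)} \in x_1^N$'' contributes to at least one $(\ell,K)$-count but that the counts are defined so that Lemma~\ref{lem:m-words-in-lk} (which counts matches at positions $K + r\ell$ with \emph{distinct} $r$, hence distinct positions, and crucially uses that distinct \emph{words} in a good block force distinct positions) applies verbatim. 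Concretely, one fixes for each $j$ with a match the smallest matching position, reads off its length $\ell = \ell_{j,s}$ and residue $K$, and notes that $\{j : y^{(j,s)} \in x_1^N\} = \bigsqcup_{\ell,K} \{j : \ell_{j,s}=\ell,\ \text{smallest match of } y^{(j,s)} \equiv K \bmod \ell\}$; the goodness hypothesis is what guarantees these per-class events are of the form controlled by Lemma~\ref{lem:m-words-in-lk}. Everything else is the routine geometric-series estimate above.
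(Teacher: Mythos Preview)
Your reduction to the $(\ell,K)$-classes and the invocation of Lemma~\ref{lem:m-words-in-lk} match the paper exactly, but the tail estimate contains a genuine gap. By the pigeonhole/union-bound step you yourself state, the event $\{\sum_{\ell,K} Z_{\ell,K} > \epsilon d_+\}$ is contained in $\bigcup_{\ell,K}\{Z_{\ell,K} > \epsilon d_+/\ell_+^2\}$, \emph{not} in $\bigcup_{\ell,K}\{Z_{\ell,K} \geq \epsilon d_+\}$. Your final display
\[
    \P\{\#\{j : y^{(j,s)} \in x_1^N\} > \epsilon d_+\} \leq \ell_+^2 \max_{\ell, K} \P\{Z_{\ell,K} \geq \epsilon d_+ \}
\]
is therefore not a valid inference: since $\P\{Z_{\ell,K}\geq \epsilon d_+\}\leq \P\{Z_{\ell,K}\geq \epsilon d_+/\ell_+^2\}$, bounding the former tells you nothing about the quantity you need. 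The remark that ``$\ell_+^2$ is only polylogarithmic'' does not repair this; the inequality simply runs the wrong way.

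Moreover, if you re-run your computation at the correct threshold $m_0=\epsilon d_+/\ell_+^2$, the crude bound $\binom{d_+}{m}\leq 2^{d_+}$ is too weak: you would get an exponent of order $(\ln 2)\,d_+ - C\epsilon^2 d_+/\ell_+$, and since $\ell_+\to\infty$ the positive $2^{d_+}$ term dominates and the bound is vacuous. The paper fixes this by the standard exponential-moment trick (Markov's inequality applied to $\Exp{b\chi_{(K,\ell)}}$, then the binomial theorem, then the choice $b=\tfrac{\epsilon}{2}\ln N$), which at threshold $\epsilon d_+/\ell_+^2$ produces exactly $\Exp{\frac{\gamma_+\epsilon^2}{8}\frac{d_+}{\ell_+}}$. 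An alternative salvage of your direct-summation approach would be to replace $2^{d_+}$ by the sharper $\binom{d_+}{m}\leq (ed_+/m)^m$; then at $m\geq m_0$ one has $(ed_+/m)^m p^m\leq (e\ell_+^2/\epsilon)^m N^{-m\epsilon/2}$, and since $\ln(\ell_+^2)=o(\ln N)$ the geometric tail again yields the desired $\Exp{\frac{\gamma_+\epsilon^2}{8}\frac{d_+}{\ell_+}}$.
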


\begin{proof}
    Fix~$s\in\mathcal{S}_{\mathrm{g}}(y_1^N)$. Given $\ell$ and $K \in \{1,\dotsc,\ell\}$, consider
    \begin{equation}
        \label{eq:P-prob-bound-fixed-kl}
        \chi_{(K,\ell)} := \sum_{i :  \ell_{i,s} = \ell} \mathbf{1}_{W_\ell(y^{(i,s)},\,\cdot\,)\leq N} \cdot \mathbf{1}_{W_\ell(y^{(i,s)},\,\cdot\,) \equiv_{\operatorname{mod} \ell} K} 
    \end{equation}
    Observe that for any fixed $x$,
    \[
        \#\{j : y^{(j,s)} \in x_1^N\}
        \leq \sum_{i=1} ^{d_{s}}\mathbf{1}_{W_{\ell_{i,s}}(y^{(i,s)},x)\leq N}
    \]
    and so for the random variable in~\eqref{eq:P-prob-bound} to exceed $\epsilon  d_+$, at least one of the random variables $\chi_{(K,\ell)}$ defined by~\eqref{eq:P-prob-bound-fixed-kl} must exceed $\tfrac{\epsilon  d_+}{\ell_+^2}$, that is
    \begin{align}
        \P\{x:\#\{j : y^{(j,s)} \in x_1^N\} > \epsilon d_+\}
        &\leq\P\left( \bigcup_{(K,\ell)} \left\{x: \chi_{(K,\ell)}(x)> \epsilon\frac{d_+}{\ell_+^2 } \right\} \right). \label{eq:UBgood}
    \end{align}
   
    Following the same strategy as in the proof of Proposition~\ref{prop:c-mostly-good}, we use Markov's inequality, the binomial theorem and Lemma~\ref{lem:m-words-in-lk} to derive that, for every $b>0$,
    \[ 
        \P\left\{x: \chi_{(K,\ell)}(x)> \epsilon\frac{d_+}{\ell_+^2} \right\}\leq \left(1+\frac{\Exp{b+k_{\ell_+}}}{N^{\epsilon}}\right)^{d_+}\Exp{-b\epsilon\frac{d_+}{\ell_+^2 }}.
    \]
    Choosing $b=\frac{\epsilon}{2}\ln N$ yields 
    \begin{align*}
        \P\left\{x: \chi_{(K,\ell)}(x)> \epsilon\frac{d_+}{\ell_+^2} \right\}
        &\leq \Exp{ \frac{\gamma_+\epsilon^2}4 \frac{d_+}{\ell_+}(1-o(1))}\leq \Exp{ \frac{\gamma_+\epsilon^2}8\frac{d_+}{\ell_+}}
    \end{align*}
    for $N$ large enough, recalling that $\gamma_+<0$. 
    Going back to our observation \eqref{eq:UBgood},
    we conclude the proof by performing a union bound over~$K$ and~$\ell$.
\end{proof}
\subsection{Cross entropy}
\label{sec:c-en}
\begin{lemma}
\label{lem:random-Birk-sum}
    If $\P$ satisfies \ref{it:ID}, $y\in\supp\P$ and $y_1^N$ is parsed as
    \[ 
        y_1^N = y^{(1,N)} y^{(2,N)} \dotsc y^{(c'_N-1,N)} y^{(c'_N,N)},
    \]
    with $\ell_j\coloneqq |y^{(j,N)}| \geq \lambda_N$ for some properly diverging, nonnegative sequence $(\lambda_N)_{N=1}^\infty$, then 
    \[
        \sum_{j=1}^{c'_N} \ln \P[y^{(j,N)}] = \ln \P[y_1^N] + o(N).
    \]
\end{lemma}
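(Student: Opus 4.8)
The plan is to telescope the identity $\ln\P[y_1^N]=\sum_{j=1}^{c'_N}\ln\P[y^{(j,N)}]+\text{(correction terms)}$ by inserting the decoupling inequalities \eqref{eq:ID-upper} and \eqref{eq:ID-lower} of \ref{it:ID} one word at a time, and then to argue that the accumulated correction is $o(N)$ because there are only $O(N/\lambda_N)=o(N)$ words, each contributing a correction that is $o(\ell_j)$ uniformly, hence $o(N)$ in total after summation. Concretely, writing $z^{(j)}:=y^{(\,j,N)}y^{(\,j+1,N)}\dotsm y^{(c'_N,N)}$ for the suffix starting at the $j$-th word, a single application of \eqref{eq:ID-upper} with $a=y^{(j,N)}$ (length $\ell_j$) and $b=z^{(j+1)}$ gives $\P[z^{(j)}]\le \Exp{k_{\ell_j}}\P[y^{(j,N)}]\P[z^{(j+1)}]$, and since $y_1^N=z^{(1)}\in\supp\P$ implies every $y^{(j,N)}$ and $z^{(j+1)}$ lies in the support and their concatenation does too, \eqref{eq:ID-lower} gives the matching lower bound $\P[z^{(j)}]\ge \Exp{-k_{\ell_j}}\P[y^{(j,N)}]\P[z^{(j+1)}]$. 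Taking logarithms,
\[
    \left|\ln\P[z^{(j)}]-\ln\P[y^{(j,N)}]-\ln\P[z^{(j+1)}]\right|\le k_{\ell_j}.
\]

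Iterating this over $j=1,2,\dotsc,c'_N-1$ (the last word needs no splitting) and using the triangle inequality yields
\[
    \left|\ln\P[y_1^N]-\sum_{j=1}^{c'_N}\ln\P[y^{(j,N)}]\right|\le \sum_{j=1}^{c'_N-1} k_{\ell_j}.
\]
It remains to show the right-hand side is $o(N)$. Here I would use that $k_n=o(n)$: fix $\delta>0$ and choose $L_\delta$ so that $k_n\le \delta n$ for all $n\ge L_\delta$; since $\lambda_N\to\infty$, for $N$ large enough every $\ell_j\ge\lambda_N\ge L_\delta$, so $k_{\ell_j}\le\delta\ell_j$ and $\sum_{j=1}^{c'_N-1}k_{\ell_j}\le\delta\sum_{j=1}^{c'_N}\ell_j=\delta N$. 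As $\delta>0$ was arbitrary, $\sum_{j=1}^{c'_N-1}k_{\ell_j}=o(N)$, which is exactly the claim. (One subtlety: the sequence $(k_n)$ appearing in \ref{it:ID} is nondecreasing, so one may alternatively bound $\sum_{j=1}^{c'_N-1}k_{\ell_j}\le (c'_N-1)\,k_{\ell_+}$ where $\ell_+=O(\ln N)$ is the a priori upper bound on word lengths in the relevant parsings and $c'_N=O(N/\ln N)$, giving $O(N)\cdot o(1)\cdot\frac{\ln N}{\ln N}$-type control; but since the lemma is stated for a general parsing with only a lower bound $\lambda_N$ on word lengths and no a priori upper bound, the $\delta$-argument above is the clean one and uses only $k_n=o(n)$ together with $\sum\ell_j=N$.)

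The only mild obstacle is bookkeeping at the boundary: the last word $y^{(c'_N,N)}$ is not required to satisfy any length lower bound in the applications (cf.\ the auxiliary parsings, where $\ell_{\widehat c_N,N}$ only satisfies an upper bound), but this causes no problem here because we never split off a $b$-part after the last word — the telescoping stops at $j=c'_N-1$, and a single possibly-short last term $\ln\P[y^{(c'_N,N)}]$ is simply part of the sum we are comparing to, not part of the error. So the proof is just: iterate \ref{it:ID} to get the error bound $\sum_{j<c'_N}k_{\ell_j}$, then invoke $k_n=o(n)$ and $\sum\ell_j=N$ to conclude $o(N)$.
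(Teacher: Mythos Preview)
Your proof is correct and follows essentially the same approach as the paper's own: iterate the two-sided decoupling inequality~\ref{it:ID} across the parsing to obtain the error bound $\sum_{j<c'_N}k_{\ell_j}$, then use $k_n=o(n)$ together with $\ell_j\ge\lambda_N\to\infty$ and $\sum_j\ell_j=N$ to conclude that this error is $o(N)$. The paper presents only the upper bound explicitly and declares the lower bound similar, whereas you handle both at once via the absolute value; otherwise the arguments coincide.
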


\begin{proof}
    Suppose $\P$ satisfies \ref{it:ID}, $y\in\supp\P$ and $y_1^N$ is parsed as in the statement. Both the upper and lower bound are proved similarly so we only provide the proof of the former. Let $\epsilon > 0$ be arbitrary and note that \ref{it:ID} yields
    \begin{align*}
        \ln \P[y_1^N]&=\ln \P[y^{(1,N)} y^{(2,N)} \dotsc y^{(c'_N-1,N)} y^{(c'_N,N)}]\\
        &\leq \ln \left(\Exp{k_{\ell_1}+\dots+k_{\ell_{c'_N-1}}}\P[y^{(1,N)}]\P[ y^{(2,N)}] \dotsc \P[ y^{(c'_N,N)}]\right)\\
        &=\sum_{j=1}^{c'_N}\ln \P[y^{(j,N)}]+\sum_{j=1}^{c'_N-1}k_{\ell_j}.
    \end{align*}
    Now since $k_\ell = o(\ell)$ and $\lambda_N \to \infty$, we have $k_{\ell_j} < \epsilon \ell_{j}$ for $N$ large enough. Therefore, 
    \begin{align*}
        \ln \P[y_1^N]
            &< \sum_{j=1}^{c'_N}\ln \P[y^{(j,N)}]+\sum_{j=1}^{c'_N-1}\epsilon {\ell_j} \\
            &< \sum_{j=1}^{c'_N}\ln \P[y^{(j,N)}] + \epsilon N
    \end{align*}
    for $N$ large enough.
\end{proof}

\begin{remark}\label{rem:CEmod} Note that the contribution coming from the buffers $\xi^{(s,N)}$, with $s\in\{1,\dotsc,M_N\}$, in the modified auxiliary parsing \eqref{eq:block-parsing} can be embedded in the correction term $o(N)$ in the statement of Lemma~\ref{lem:random-Birk-sum}. This immediately follows by observing that $M_N=o(\Tilde c_N)$.
\end{remark}

\begin{lemma}\label{lem:cross-SMB}
    If $\P$ satisfies \ref{it:ID} and $\Q$ is ergodic, and if $\supp \Q \subseteq \supp \P$, then
    \[ 
        -\ln \P[y_1^N] = N\Sc(\Q|\P) + o(N)
    \]
    for  almost every~$y\sim\Q$.
\end{lemma}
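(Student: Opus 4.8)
The plan is to obtain this cross-entropy analogue of the Shannon--McMillan--Breiman theorem from Birkhoff's ergodic theorem applied, for each fixed $L\in\nn$, to the function $g_L\colon y\mapsto-\ln\P[y_1^L]$, after using \ref{it:ID} to sandwich $-\ln\P[y_1^N]$ between $T$-Birkhoff sums of $g_L$ along $y$. Note first that (the hypothesis $\supp\Q\subseteq\supp\P$ being equivalent to $\supp\Q_n\subseteq\supp\P_n$ for all $n$) the quantity $a_L:=\mathbb{E}_\Q[g_L]=-\sum_{a\in\supp\Q_L}\Q[a]\ln\P[a]$ is a finite sum of finite nonnegative terms, so $g_L\ge 0$ lies in $L^1(\Q)$, and $-\ln\P[y_1^N]<\infty$ for every $y\in\supp\Q$.

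First I would record the two one-sided comparisons. Fix $L$, fix $N$, fix a phase $p\in\{0,\dots,L-1\}$, and write $N-p=q_pL+r_p$ with $0\le r_p<L$. Splitting the cylinder $[y_1^N]$ after position $p$ and then detaching $q_p$ consecutive blocks of length $L$ (the leftover pieces, of length $<L$, all lying in $\supp\P$ since $y\in\supp\Q\subseteq\supp\P$), repeated use of \eqref{eq:ID-upper} gives, after discarding the nonnegative $-\ln$'s of the leftover pieces,
\[
  -\ln\P[y_1^N]\ \ge\ \sum_{i=0}^{q_p-1}g_L\!\big(T^{p+iL}y\big)\ -\ \tfrac{N}{L}\,k_L\ -\ k_L ,
\]
because the at most $N/L+1$ detachments each cost an additive error at most $k_L$. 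Applying instead \eqref{eq:ID-lower} --- legitimate precisely because all words in sight lie in $\supp\P$ --- and bounding each leftover $-\ln\P[\,\cdot\,]$ by the constant $C_L:=\max\{-\ln\P[b]:1\le r\le L-1,\ b\in\supp\P_r\}<\infty$, one gets the matching
\[
  -\ln\P[y_1^N]\ \le\ \sum_{i=0}^{q_p-1}g_L\!\big(T^{p+iL}y\big)\ +\ \tfrac{N}{L}\,k_L\ +\ k_L+2C_L .
\]
Averaging each inequality over $p=0,\dots,L-1$ and using that the indices $p+iL$ ($0\le p\le L-1$, $0\le i\le q_p-1$) are distinct elements of $\{0,\dots,N-1\}$ omitting only $O(L)$ of them, all lying in $[N-2L,N-1]$, one obtains
\[
  \Big|\ \frac{-\ln\P[y_1^N]}{N}\ -\ \frac{1}{NL}\sum_{j=0}^{N-1}g_L\!\big(T^{j}y\big)\ \Big|\ \le\ \frac{1}{NL}\sum_{j=N-2L}^{N-1}g_L\!\big(T^jy\big)\ +\ \frac{k_L}{L}\ +\ \frac{k_L+2C_L}{N} .
\]

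Next I would let $N\to\infty$ with $L$ fixed. Since $\Q$ is ergodic and $g_L\in L^1(\Q)$, Birkhoff's theorem gives $\tfrac1N\sum_{j=0}^{N-1}g_L(T^jy)\to a_L$ for $\Q$-a.e.\ $y$, and likewise the tail sum $\tfrac1N\sum_{j=N-2L}^{N-1}g_L(T^jy)=\tfrac1N\sum_{j=0}^{N-1}g_L(T^jy)-\tfrac1N\sum_{j=0}^{N-2L-1}g_L(T^jy)\to a_L-a_L=0$. Hence, for $\Q$-a.e.\ $y$,
\[
  \frac{a_L}{L}-\frac{k_L}{L}\ \le\ \liminf_{N\to\infty}\frac{-\ln\P[y_1^N]}{N}\ \le\ \limsup_{N\to\infty}\frac{-\ln\P[y_1^N]}{N}\ \le\ \frac{a_L}{L}+\frac{k_L}{L}.
\]
Intersecting these full-measure sets over $L\in\nn$ yields one full-measure set on which the display holds for every $L$; since $k_L=o(L)$, letting $L\to\infty$ forces the $\liminf$ and $\limsup$ to agree, so $C(y):=\lim_{N\to\infty}\tfrac{-\ln\P[y_1^N]}{N}$ exists. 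Being trapped in $[\,a_L/L-k_L/L,\ a_L/L+k_L/L\,]$ for every $L$, this limit is independent of $y$ and equals $\lim_{L\to\infty}a_L/L$. By definition this common value is $\Sc(\Q|\P)$, which therefore exists (and is finite), and the claimed $-\ln\P[y_1^N]=N\Sc(\Q|\P)+o(N)$ follows.

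The only point requiring care --- rather than a genuine obstacle --- is that for fixed $L$ the decoupling errors survive division by $N$: $\tfrac1N\cdot\tfrac NL k_L\to\tfrac{k_L}{L}\neq 0$. So the argument cannot be closed at the level of the $N\to\infty$ limit; it is the subsequent $L\to\infty$ limit that both removes these errors (because $k_L=o(L)$) and, as a by-product, delivers the existence of $\Sc(\Q|\P)$. The averaging over phases $p$ is a minor device whose sole purpose is to replace the Birkhoff sum along the possibly non-ergodic map $T^L$ by an honest $T$-Birkhoff sum; one could equally well run Birkhoff for $T^L$ and then check that the resulting limit is $T$-invariant, hence $\Q$-a.e.\ constant.
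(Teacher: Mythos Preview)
Your proof is correct, but it takes a different route from the paper's. The paper's argument is a one-liner: the upper bound~\eqref{eq:ID-upper} in~\ref{it:ID} makes the sequence $f_n(y)=\ln\P[y_1^n]$ almost subadditive (namely $f_{n+m}\le f_n+f_m\circ T^n+k_n$ with $k_n=o(n)$), so Kingman's subadditive ergodic theorem applied on $(\supp\P,T,\Q)$ yields $f_N/N\to -\Sc(\Q|\P)$ for $\Q$-a.e.~$y$, with the limit constant by ergodicity of~$\Q$. Your approach instead sandwiches $-\ln\P[y_1^N]$ between phase-averaged $T$-Birkhoff sums of $g_L$, applies Birkhoff for each fixed $L$, and then lets $L\to\infty$ using $k_L=o(L)$. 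The trade-offs are: the paper's proof is much shorter and uses only the upper decoupling bound~\eqref{eq:ID-upper}, whereas yours uses both~\eqref{eq:ID-upper} and~\eqref{eq:ID-lower}; on the other hand, your argument is more elementary in that it relies only on Birkhoff and avoids invoking an almost-subadditive version of Kingman (which the paper's one-line appeal to ``Kingman's subadditive ergodic theorem'' tacitly requires to absorb the $k_n$ error). Your phase-averaging device and the two-scale limit $N\to\infty$ then $L\to\infty$ are, in effect, a hands-on proof of that special case of Kingman.
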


\begin{proof}
    Fix $\P$ and $\Q$ as in the statement. In view of the upper bound in \ref{it:ID}, we can apply Kingman's subadditive ergodic theorem to the sequence $(f_n)_{n=1}^{\infty}$ of measurable functions on the dynamical system~$(\supp \P, T, \Q)$ defined by $f_n(x) := \ln \P[x_1^n]$.
\end{proof}

\subsection{Comments}
\label{ssec:comments}
The following consequence of \ref{it:ID} played an important role in the proof of the upper bound:
\begin{description}
    \item[{Ad}\namedlabel{it:Ad}{Ad}] For every $n \in \nn$, the bound
        \begin{equation}
        \label{eq:add-letter}
            \min \left\{ \frac{\P[ab]}{\P[a]} : a \in \supp \P_n, b \in \supp \P_1, ab \in \supp \P_{n+1} \right\} \geq \Exp{-k_n}
        \end{equation} 
        holds.
\end{description}
Indeed, by construction of Ziv and Merhav's auxiliary parsings, there is a lower bound on $\P[\underline{y}^{(j,N)}]$ and an upper bound on $\P[y^{(j,N)}]$, but both the bounds~\eqref{eq:c-leq-c-hat} and~\eqref{eq:cbar-geq-sum} require a lower bound on $\P[{y^{(j,N)}}]$; see Lemma~\ref{lem:key-prob-ub}.
Condition~\ref{it:Ad} serves as a way of going back and forth between the two. Unfortunately, {\ref{it:Ad}} may fail upon relaxing the lower bound in {\ref{it:ID}} to the more general lower-decoupling conditions that have met with success in tackling other related problems~\cite{CJPS19,BCJPPExamples,CDEJR23w,CR23}. We will come back to this point in Section~\ref{ssec:hmm}.

As for the arguments available in the literature to establish~\ref{it:KB}, we foresee no difficulty in adapting our argument to a set of hypotheses where the roles of~$a$ and~$b$ are exchanged in the decoupling inequalities. Indeed, this would not affect \ref{it:SE} nor~\ref{it:Ad}. While the Markov property can be equivalently written in terms of conditioning on the past or conditioning on the future, the class of g-measures discussed in Section~\ref{sec:examples} and its ``reverse'' counterpart do not coincide; see e.g.~\cite[\S{4.4}]{BFV19}.

As mentioned in the introduction, the Ziv--Merhav estimator can be written in terms of longest-match lengths:
\begin{align*}
    \frac{c_N \ln N}{N}
    &= \frac{\ln N}{\frac{1}{c_N}\sum_{i=1}^{c_N} \ell^{(i,N)}}
\end{align*}
where%
    \footnote{The minimum over the two terms will be given by the former as long as $i<c_N$. However, this formulation is necessary to take care of the ``edge cases'' alluded to in the Introduction.}
\[
    \ell^{(i,N)} = \min\{\Lambda_N(T^{L^{(i-1,N)}}y,x), N-L^{(i-1,N)}\},
\]
with
\begin{align*}
    L^{(0,N)} = 0
    \quad&\text{and}\quad
    L^{(i,N)} = L^{(i-1,N)} + \ell^{(i,N)}
\end{align*}
for $i = 1,2,\dotsc, c_N$. It is known that the longest-match estimator~$(\ell^{(1,N)})^{-1}\ln N = \Lambda_N(y,x)^{-1} \ln N$ converges almost surely to the cross entropy, with good probability estimates, for a class of measures that is more general than that considered here; see~\cite[\S{1.3}]{Ko98} and~\cite[\S{3}]{CDEJR23w}. Hence, if each $T^{L^{(i-1,N)}}y$ were replaced by a new independent sample from~$\Q$, or by $T^{\Delta(i - 1)}y$ for some fixed deterministic $\Delta \in \nn$, then one would expect the convergence of the Ziv--Merhav estimator to also hold considerably more generally. However, the dependence structure of the starting indices seems to be posing a serious technical difficulty for the strategy of Ziv and Merhav. 

\section{Examples}
\label{sec:examples}


{In this section, we discuss broad classes of measures to which our results apply. For this discussion, we need basic topological considerations that we had avoided so far. A one-sided (resp.\ two-sided) \emph{subshift} is a closed subset of~$\cA^\nn$ (resp.~$\cA^\zz$) obtained by removing all sequences containing at least one string from some set of \emph{forbidden strings}. Closure is understood in the product topology, and the subshift is equipped with the subspace topology inherited from that topology. A subshift is said to be \emph{of finite type} if the list of forbidden strings that defines it can be chosen to be finite. A subshift of finite type is said to be \emph{topologically transitive} if, for any two strings $a$ and $b$ with $[a]$ and $[b]$ intersecting the subshift, there exists a third string~$\xi$ such that $[a\xi b]$ also intersects the subshift. We refer the reader to~\cite[\S{7}]{DeGrSi} or~\cite[\S{8}]{KLO} for a more thorough discussion.} 

\subsection{Markov measures}

As mentioned in Section~\ref{sec:setting}, if $\P$ is the stationary measure for an irreducible Markov chain with positive entropy, then $\P$ is ergodic and satisfies \ref{it:ID}, \ref{it:FE}, and \ref{it:KB}. We use this setting to illustrate the role of some of our conditions. 

Note that in the case of a \emph{reducible} Markov chain, a stationary measure can charge two disjoint communication classes; let us call those classes~$\cA'$ and $\cA''$. Then, for $a \in \cA'$, the probability $\P\{x : W_1(a,x) \geq r\} \geq \P\{x : x_1 \in \cA''\}$ does not decay as $r \to \infty$. 
In terms of the language of subshifts, the failure of~\ref{it:KB} is due to the fact that~$\supp\P$ does not satisfy any form of specification; it is a subshift of finite type that fails to be transitive. More concretely, if the sequence~$x$ starts in~$\cA''$, then it remains there forever and we do not expect to be able to probe any entropic quantity that also involves the behaviour of~$\P$ on~$\cA'$ using the information contained in~$x$.

Also note that a stationary measure for an irreducible Markov chain could fail to have positive entropy if, for example, it is a convex combination of Dirac masses on periodic orbits. 
Such a behaviour is at odds with \ref{it:FE} and can cause the bounds on the lengths of the parsed words not to be controlled in terms of $\ell_\pm$, a fact which was used repeatedly throughout our proofs.

\newcommand{\rgm}{regular g-measure}
\newcommand{\Rgm}{Regular g-measure}

\subsection{\Rgm{s}}

Let $\Omega'$ be a topologically transitive one-sided subshift of finite type. Choosing as a starting point one particular definition in the literature among others, we will say that a translation-invariant measure~$\P$ on~$\Omega$ is a \emph{\rgm{} on~$\Omega'$} if $\supp \P = \Omega'$ and there exists a continuous function~$g: \Omega' \to (0,1]$ such that
\begin{equation}\label{eq:g1}
    \sum_{\substack{y \in \Omega'\\ Ty = x}} g(y) = 1
\end{equation}
for all~$x\in\Omega'$ and
\begin{equation} 
\label{eq:gn-to-g}
    \lim_{n\to\infty} \sup_{x \in \Omega'} \left|\frac{\P[x_1^{n}]}{\P[x_2^n]} - g(x)\right| = 0.
\end{equation}
The convergence~\eqref{eq:gn-to-g}
can be used to show that $\P$ satisfies the decoupling condition~\ref{it:ID}; see~\cite[\S{B.3}]{CR23}. Our assumption on~$\Omega'$ more than suffices for~\ref{it:ID} to yield~\ref{it:KB}; see~\cite[\S\S{3.1,\,B.2}]{CR23}

Note that the ratio being compared to $g$ is continuous in~$x$ at finite~$n$, and the $k$-level Markov condition, once written in terms of conditioning on the future, implies that this ratio is eventually constant in~$n$\,---\, starting with $n=k+1$. Hence, \rgm{s} do generalize stationary $k$-level Markov measures.

Finally, let us discuss Condition~\ref{it:FE} in the context of \rgm{s}. To do so, we will use the fact that the convergence~\eqref{eq:gn-to-g} can also be used to establish the following weak Gibbs condition of Yuri at vanishing topological pressure: there exists an $\Exp{o(n)}$-sequence $(K_n)_{n=1}^\infty$ such that
\[ 
    K_n^{-1} \Exp{\sum_{j=0}^{n-1} \ln g(T^jx) } \leq \P[x_1^n] \leq K_n \Exp{\sum_{j=0}^{n-1} \ln g(T^jx)}
\]
for every $x \in \Omega'$; again, see~\cite[\S{B.3}]{CR23}, but it should be noted that this can be seen as part of the ``g-measure folklore''~\cite{Wa05,OST05,BFV19}.
We are now ready to provide a necessary and sufficient condition on the subshift~$\Omega'$ for~\ref{it:FE} to hold for all \rgm{s} on~$\Omega'$. One special case will be that \rgm{s} on topologically mixing subshifts of finite type with more than one letter satisfy \ref{it:ID} and \ref{it:FE}, allowing for an application of our main result.

\begin{lemma}
    Suppose that $\P$ is a \rgm{} on~$\Omega'$. Then, $\P$ satisfies~\ref{it:FE} if and only if there exists $r$ with the following property: for every $y \in \Omega'$, there exists $t\leq r$ such that $T^ty$ has more than one preimage in~$\Omega'$.
\end{lemma}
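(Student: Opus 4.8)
The plan is to exploit the weak Gibbs representation $\P[x_1^n] \asymp \exp(\sum_{j=0}^{n-1}\ln g(T^jx))$ up to subexponential factors $K_n$, which reduces Condition~\ref{it:FE} to understanding how large the Birkhoff sums $\frac{1}{n}\sum_{j=0}^{n-1}\ln g(T^jx)$ can get, uniformly in $x \in \Omega'$. Since $g$ takes values in $(0,1]$, all these sums are nonpositive, and the supremum of $\P[x_1^n]$ over $\supp\P_n$ corresponds to finding points $x$ along which $\ln g \circ T^j$ is as close to $0$ as possible. The quantity $\ln g(T^jx)$ is close to $0$ precisely when $T^jx$ is close to a point where $g = 1$, i.e.\ a point with a unique preimage in~$\Omega'$ (by the normalization~\eqref{eq:g1}, $g(y) = 1$ forces $y$ to be the only preimage of $Ty$). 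So the dichotomy in the statement is really about whether orbits can linger near the ``deterministic'' part of the subshift for arbitrarily long stretches.

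First I would prove the ``if'' direction. Assume such an $r$ exists. I claim there is a uniform constant $\delta > 0$ such that for every $x \in \Omega'$ and every window of $r$ consecutive times $j, j+1, \dots, j+r-1$, at least one index $t$ in that window has $g(T^tx) \leq 1 - \delta$. Indeed, by hypothesis applied to $y = T^jx$, some $T^{j+t}x$ with $t \leq r$ has more than one preimage in~$\Omega'$; the set $Z$ of points of $\Omega'$ with more than one preimage is closed (it is the complement of an open condition in the compact metric space $\Omega'$ — or one argues directly via finitely many forbidden strings for an SFT), and $g$ restricted to a suitable compact neighbourhood structure is bounded away from $1$ on $Z$... here one must be slightly careful: what we need is that the function $x \mapsto \max_{0 \le t \le r} g(T^t x)$ is bounded away from $1$ on all of $\Omega'$, which follows because it is a continuous function on the compact set $\Omega'$ that never attains the value $1$ (if it did at some $x$, then $g(T^tx)=1$ for all $t \le r$, forcing $T^jx$ to have a unique preimage for all $j\le r$, contradicting the hypothesis). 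Call this uniform bound $1-\delta$. Then along any orbit segment of length $n$, the number of indices $j$ with $g(T^jx) \le 1-\delta$ is at least $\lfloor n/r\rfloor$, so $\sum_{j=0}^{n-1}\ln g(T^jx) \le \lfloor n/r\rfloor \ln(1-\delta) \le -cn$ for some $c>0$ and $n$ large. Combining with $\P[x_1^n] \le K_n \exp(\sum \ln g(T^jx))$ and $K_n = \Exp{o(n)}$ gives $\sup_{a\in\supp\P_n}\P[a] \le \Exp{-cn/2}$ for $n$ large, which is exactly~\ref{it:FE} with $\gamma_+ = -c/2$.

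For the ``only if'' direction, suppose no such $r$ exists: for every $r$ there is a point $y_r \in \Omega'$ all of whose iterates $y_r, Ty_r, \dots, T^r y_r$ have a unique preimage in~$\Omega'$. Since each $T^t y_r$ has a unique preimage, the lift is forced, and one checks that $g(T^t y_r) = 1$ for $t = 0, 1, \dots, r-1$ (unique preimage plus~\eqref{eq:g1} forces $g=1$ there); hence $\sum_{j=0}^{r-1}\ln g(T^j y_r) = 0$, so $\P[(y_r)_1^r] \ge K_r^{-1} = \Exp{-o(r)}$. This contradicts~\ref{it:FE}, which would require $\P[(y_r)_1^r] \le \Exp{\gamma_+ r}$ with $\gamma_+ < 0$ for all large $r$: taking $r \to \infty$ along the sequence produces cylinders in $\supp\P_r$ with probability decaying strictly slower than any fixed exponential rate, the desired contradiction.

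The main obstacle I anticipate is the compactness/continuity bookkeeping in the ``if'' direction: pinning down precisely that ``more than one preimage in $\Omega'$'' is a condition which, propagated through the continuous map $g$ and the (continuous) shift, yields a \emph{uniform} gap $1-\delta$ — and dually, in the ``only if'' direction, that a unique preimage genuinely forces $g = 1$ there rather than merely $g$ close to $1$. Both hinge on the exact normalization~\eqref{eq:g1}: summing $g$ over the preimage fibre gives $1$, and $g > 0$ strictly, so a singleton fibre has its single value equal to $1$. One should also double-check that the edge-of-block iterate $T^r y$ (as opposed to $T^{r-1}y$) plays no role — it is only the $r$ indices $0,\dots,r-1$ that contribute to the length-$r$ cylinder — and adjust the window-counting constant accordingly; this is the kind of harmless off-by-one that the paper's ``factors of $2$ are suboptimal'' philosophy absorbs.
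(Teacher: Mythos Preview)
Your approach matches the paper's: both directions hinge on the weak Gibbs representation, with the ``if'' direction reducing to the fact that every length-$r$ orbit window contains a point where $g\le 1-\delta$, and the ``only if'' direction exhibiting, for each $n$, a point $y$ with $g(T^jy)=1$ for $j=0,\dots,n-1$.

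Two small corrections to your ``if'' argument. First, the function you want bounded away from~$1$ is $x\mapsto\min_{0\le t\le r}g(T^tx)$, not $\max$; your own next clause (``if it did at some $x$, then $g(T^tx)=1$ for \emph{all} $t\le r$'') shows you had $\min$ in mind. Second, the paper bypasses the compactness argument entirely: if $T^ty$ has at least two preimages, one of them is $T^{t-1}y$, and the normalization~\eqref{eq:g1} together with $g>0$ gives directly $g(T^{t-1}y)\le 1-\min g$, so one may simply take $\delta=\min g$. This also handles the index shift you flag at the end cleanly, since the bound lands on $g$ at the \emph{preimage} (step $t-1$) rather than at $T^ty$ itself.
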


\begin{proof}
    Suppose that there exists~$r$ as above.
    Then, for every $y \in \Omega'$, there exists $t \leq r$ such that 
    \[
        g(T^{t-1} y) = 1 - \sum_{\substack{z \in \Omega'\setminus\{T^{t-1} y\} \\ Tz = T^{t}y}} g(z) \leq 1 - \delta,
    \]
    where $\delta := \min g$. This number is positive by continuity and compactness. Therefore,
    \begin{align*}
        \ln g(y) + \ln g(Ty) + \dotsb + \ln g(T^{t-1}y)
        \leq \ln(1-\delta)
    \end{align*}
    and $\ln g(T^{t'}y) \leq \ln(1-\delta)$ for any $t'\geq t$.
    But then, the weak Gibbs property yields 
    \begin{align*}
        \P[y_1^n] 
        &\leq K_n \Exp{\sum_{i=0}^{\lfloor \frac nr\rfloor-1} \sum_{t=0}^{
        r-1} \ln g(T^{ir+t}y)} \\
        &\leq \exp\left(\ln K_n +  \left\lfloor \frac nr \right\rfloor \ln(1-\delta)\right),
    \end{align*}
    with $\ln K_n = o(n)$. We conclude that Condition~\ref{it:FE} holds.
    Suppose now that no such~$r$ exists. Then, for every $n\in\nn$, there exists $y\in\Omega'$ such that $T^ty$ has only one preimage in $\Omega'$ for all $t\leq n$. By the condition~\eqref{eq:g1}, this means that 
    \[
        \ln g(y)+\ln g(Ty)+\dots+\ln g(T^{ n-1} y)=0,
    \]
    which, together with the lower bound in the weak Gibbs property, implies
    \[
        \P[y_1^n]\geq K_n^{-1}=\Exp{-o(n)}.
    \]
    Since the right-hand side is eventually greater than $\Exp{\gamma_+n}$ for any $\gamma_+<0$, \ref{it:FE} fails as well. 
\end{proof}

\subsection{Statistical mechanics}

Let~$\overline{\Omega}'$ be a topologically transitive, two-sided subshift of finite type, and let $\Omega'$ be its one-sided counterpart. Consider a family $(\Phi_X)_{X \Subset \zz}$ of interactions with 
\begin{itemize}
    \item the continuity property $\Phi_X \in C(\overline{\Omega}')$ for all $X \Subset \zz$, with $\Phi_X$ depending on the symbols with indices in the finite subset~$X$ only,
    \item the translation-invariance property $\Phi_{X+1} = \Phi_X \circ \shift$ for all $X \Subset \zz$,
    \item the absolute summability property $\sum_{\substack{X \Subset \zz \\ X \ni 1}} \sup_{x\in\overline{\Omega}'} |\Phi_X(x)|< \infty.$
\end{itemize}
Such interactions are considered e.g.\ in~\cite[\S\S{1.2,\,3.1}]{Rue} and are colloquially said to be in ``the small space''.
It is well known that any equilibrium measure~$\P$ (in the sense of the variational principle) for the energy-per-site potential
\[ 
    \phi := 
    \sum_{\substack{X \Subset \zz \\ \min X = 1}} {\Phi_X}
\]
coming from such a family of interactions
is a translation-invariant Gibbs state in the sense of the Dobrushin--Lanford--Ruelle equations; see e.g.~\cite[\S\S{3.2,\,4.2}]{Rue}.\footnote{With a slight abuse of notation, we are using $\P$ for both the equilibrium measure on~$\overline{\Omega}' \subseteq \cA^\zz$ and its natural restriction to $\Omega' \subseteq \cA^\nn$. Note that, by construction, the potential~$\phi$ only depends on symbols from~$\Omega'$.} Because we are working with a sufficiently regular subshift~$\overline{\Omega}'$, the Dobrushin--Lanford--Ruelle equations and absolute summability can be used to show that~$\P$ satisfies~\ref{it:ID} by adapting the argument of~\cite[\S{9}]{LPS95} for the case $\overline{\Omega}' = \cA^\zz$. Again, the subshift is sufficiently regular for~\ref{it:ID} to yield~\ref{it:KB}; see~\cite[\S\S{3.1,\,B.2}]{CR23}.

We now turn to Condition~\ref{it:FE}, assuming a certain familiarity with the thermodynamic formalism, physical equivalence and the Griffiths--Ruelle theorem on the reader's part; see e.g.~\cite[\S{4}]{Rue}.

\begin{lemma}
    Suppose that $\overline{\Omega}'$, $\Phi$, and $\P$ are as above. If $\Phi$ is not physically equivalent to~$0$ in the sense of Ruelle, then~$\P$ satisfies~\ref{it:FE}.
\end{lemma}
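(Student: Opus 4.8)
The plan is to show the contrapositive-flavoured statement directly: if $\Phi$ is not physically equivalent to~$0$, then the $\P$-measure of cylinders decays exponentially. The natural route is through the thermodynamic formalism. Since $\P$ is an equilibrium measure for~$\phi$ and we are at vanishing topological pressure (the energy-per-site normalization built into the setup), the weak Gibbs inequality gives
\[
    K_n^{-1}\Exp{S_n\phi(x)} \leq \P[x_1^n] \leq K_n \Exp{S_n\phi(x)}
\]
for all $x\in\Omega'$, with $\ln K_n = o(n)$, where $S_n\phi := \sum_{j=0}^{n-1}\phi\circ\shift^j$. Hence $\sup_{a\in\supp\P_n}\P[a] \leq K_n \Exp{n \sup_{x}\frac{1}{n}S_n\phi(x)}$, and \ref{it:FE} will follow as soon as one shows that $\limsup_{n\to\infty}\sup_{x\in\Omega'}\frac{1}{n}S_n\phi(x) < 0$, i.e.\ that the ``top of the spectrum'' of the Birkhoff averages of~$\phi$ is strictly negative. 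Equivalently, writing $\ptop(\psi)$ for the topological pressure on~$\overline{\Omega}'$, one wants $\ptop(\phi) = 0$ (which holds by the normalization) together with strict negativity of the maximal ergodic average.

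The key step is to rule out $\sup_x \frac1n S_n\phi(x) \geq -cn$ being violated, i.e.\ to exclude the possibility that there is a sequence of orbit segments along which $S_n\phi$ is $o(n)$ (in particular, not bounded away from $0$ from below). Suppose for contradiction that $\limsup_n \sup_x \frac1n S_n\phi(x) = 0$. Passing to invariant measures: by the variational principle and a standard compactness/weak-$*$ argument, there is a translation-invariant measure~$\mu$ on~$\overline{\Omega}'$ with $\int \phi \,\dd\mu = 0$. Combined with $\ptop(\phi) = 0 = h(\mu') + \int\phi\,\dd\mu'$ being attained only at equilibrium states and $h(\mu) \geq 0$, one deduces $h(\mu) = \ptop(\phi) - \int\phi\,\dd\mu = 0 - 0 = 0$ unless... — more carefully, since $\ptop(\phi)=0$ and $\int\phi\,\dd\mu=0$, we get $h_\mu(\shift) = 0$, but also any equilibrium state has entropy $\geq $ the entropy of the SFT's measure of maximal entropy if $\phi$ were physically trivial. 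The cleaner contradiction: $\int\phi\,\dd\mu = 0 = \ptop(\phi)$ forces $\mu$ to be an equilibrium state with $h_\mu(\shift)=0$; but on a topologically transitive SFT with positive topological entropy (which holds as soon as there is genuine branching), the measure of maximal entropy is an equilibrium state for~$0$, and $\phi$ having an equilibrium state that is also an equilibrium state for~$0$ forces, via the Griffiths--Ruelle theorem (equality of the sets of equilibrium states implies physical equivalence of the potentials), that $\phi$ is physically equivalent to~$0$ — contradiction. If instead $\overline{\Omega}'$ has zero topological entropy, one handles that degenerate case separately (it consists of finitely many periodic orbits, and the hypothesis that $\Phi$ is not physically equivalent to~$0$ then directly forces $\phi<0$ along every orbit, since constants are physically equivalent to~$0$).

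I expect the main obstacle to be making the Griffiths--Ruelle step rigorous in the precise form needed: one must invoke that two potentials in the small space with the same equilibrium state(s) are physically equivalent, and then argue that this equilibrium-state coincidence is exactly what $\limsup_n \sup_x \frac1n S_n\phi = 0$ produces. The subtlety is that $\sup_x \frac1n S_n\phi(x)$ is not itself $\frac1n S_n\phi$ along a single orbit; extracting an invariant measure with $\int\phi\,\dd\mu = 0$ from it is the crux, and this is where one uses weak-$*$ compactness of the empirical measures $\frac1n\sum_{j=0}^{n-1}\delta_{\shift^j x_n}$ together with upper semicontinuity of $\mu\mapsto\int\phi\,\dd\mu$ (which holds since $\phi\in C(\overline{\Omega}')$, using absolute summability). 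Once that measure is in hand, the thermodynamic bookkeeping $0 = \ptop(\phi) \geq h_\mu(\shift) + \int\phi\,\dd\mu = h_\mu(\shift) \geq 0$ pins down $\mu$ as a zero-entropy equilibrium state, which should contradict non-triviality of~$\phi$ via physical equivalence. I would also double-check that the normalization $\min X = 1$ in the definition of~$\phi$ indeed yields $\ptop(\phi)=0$ for the relevant equilibrium measure; if not, one works with $\phi - \ptop(\phi)$ throughout and the same argument shows $\sup_x\frac1n S_n(\phi-\ptop(\phi)) < -c < 0$, which is what is actually needed for \ref{it:FE}.
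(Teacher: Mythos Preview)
Your reduction via the weak Gibbs inequality to showing $\limsup_n \sup_x \frac1n S_n\phi(x) < 0$ is fine, as is the extraction of an invariant measure~$\mu$ with $\int\phi\,\dd\mu = 0$. The gap is in the contradiction step. You assert that this~$\mu$, which you have just shown satisfies $h(\mu) = 0$, is ``also an equilibrium state for~$0$''; but being an equilibrium state for the zero potential means $h(\mu) = P_{\textnormal{top}}(0) = \htop(\overline{\Omega}')$, which is strictly positive in the very case you are invoking. So~$\mu$ is \emph{not} an equilibrium state for~$0$, and the form of Griffiths--Ruelle you cite (a shared equilibrium state forces physical equivalence) does not apply. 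Nothing in your argument produces a measure that is simultaneously an equilibrium state for~$\phi$ and for~$0$.

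The paper's proof uses Griffiths--Ruelle in its strict-convexity form and avoids the detour through a limiting measure. After normalizing to $P_{\textnormal{top}}(\phi) = 0$, the weak Gibbs property gives $\frac1n\ln\sum_a \P[a]^{1-\alpha}\to P_{\textnormal{top}}((1-\alpha)\phi)$; one then observes that $\alpha\mapsto P_{\textnormal{top}}((1-\alpha)\phi)$ is nondecreasing (since $\int\phi\,\dd\nu\leq 0$ for every invariant~$\nu$, by the variational principle and $P_{\textnormal{top}}(\phi)=0$), vanishes at $\alpha=0$, and is strictly convex by Griffiths--Ruelle, hence strictly negative for $\alpha<0$. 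A failure of~\ref{it:FE} would force this limit to be~$\geq 0$, a contradiction. Your argument can be salvaged along the same lines: the existence of your~$\mu$ places~$0$ in the subdifferential of $s\mapsto P_{\textnormal{top}}(s\phi)$ at $s=1$; since all subgradients are~$\leq 0$, convexity then forces this map to be constant on~$[1,\infty)$, contradicting strict convexity. But this is essentially the paper's argument, and the measure~$\mu$ ends up playing no real role.
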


\begin{proof}[Proof sketch.]
    Because we can always add or subtract a constant from each~$\Phi_{\{i\}}$, there is no loss of generality in assuming that $\phi$ has topological pressure $P_{\textnormal{top}}(\phi)=0$. Then, by the weak Gibbs property established e.g. in~\cite[\S{2}]{PS20}, we have
    \[
        \lim_{n\to\infty} \frac 1n \ln \sum_{a\in\supp \P_n} \P[a]^{1-\alpha} = P_{\textnormal{top}}(\phi-\alpha\phi).
    \]
    If $\phi$ is not equivalent to~$0$ in the sense of Ruelle, then the Griffiths--Ruelle theorem guarantees that $\alpha \mapsto P_{\textnormal{top}}(\phi-\alpha\phi)$ is strictly convex; see e.g.~\cite[\S{4.6}]{Rue}. But since this function is easily shown to be nondecreasing, and since it vanishes at~$\alpha = 0$, this implies that $P_{\textnormal{top}}(\phi-\alpha\phi)<0$ for all~$\alpha < 0$. Assuming for the sake of contradiction that~\ref{it:FE} fails, one easily derives a contradiction.
\end{proof}

\begin{remark}
    The converse of this implication does not hold. For example, the uniform measure (measure of maximal entropy) on the full shift for an alphabet with at least two letters arises from vanishing interactions but satisfies \ref{it:FE}.
\end{remark}

Every irreducible, stationary Markov measure
with stochastic matrix~$[P_{a,b}]_{a,b\in\cA}$ can be obtained in this way by considering the following nearest-neighbour interactions on its support:
\[
    \Phi_{\{i,i+1\}}(x) = \ln P_{x_i,x_{i+1}}
\]
for $i\in\nn$ and $\Phi_X(x) = 0$ for $X$ not of the form $\{i,i+1\}$. To see this, one can check by direct computation that, on its support, the Markov measure satisfies the Bowen--Gibbs condition for the corresponding~$\phi$. For $k$-level Markov measures, consider instead
\[
    \Phi_{\{i,\dotsc,i+k-1,i+k\}}(x) = \ln \frac{\P[x_i\dotsc x_{i+k-1}x_{i+k}]}{\P[x_i\dotsc x_{i+k-1}]}.
\]
In this sense, equilibrium measures for potentials arising from interactions that are absolutely summable do generalize stationary $k$-level Markov measures; we refer the reader to~\cite{CHMMP14,BGMMT21} for recent thorough discussions of variants and converses to this observation. This generalization is far reaching as the theory of entropy, large deviations and phase transition is much richer in the small space of interactions than in the space of finite-range interactions.

In a similar vein, equilibrium measures (in the sense of the variational principle on~$\Omega'$) for abstract potentials~$\phi$ in the Bowen class also satisfy~\ref{it:ID}, thanks to the Bowen--Gibbs property; see~\cite[\S{4}]{Wa01}. We refer the reader to~\cite[\S{1}]{Wa01} for a definition of the Bowen class, which can be traced back to~\cite{Bo74}. This class includes potentials with summable variations, and thus H\"older-continuous potentials, and thus potentials naturally associated to stationary $k$-level Markov measures. A more complete discussion from the point of view of decoupling\,---\,including relaxation of the conditions on~$\Omega'$\,---\,can be found in~\cite[\S{2.3}]{CR23}.

\subsection{Hidden-Markov measures}\label{ssec:hmm}

While the above generalizations beyond Markovianity are often studied in the literature on mathematical physics and abstract dynamical systems, they might not be the most natural from an information-theoretic point of view; hidden-Markov models would most likely come to mind first for many practitioners.
We recall that, among several equivalent representations, a stationary hidden-Markov measure~$\P$ can be characterized by a tuple $(\pi, P, R)$ where $(\pi, P)$ characterizes in the usual way a stationary Markov process on a set~$\mathcal{S}$, called the \emph{hidden alphabet}, and~$R$ is a $(\#\mathcal{S})$-by-$(\#\cA)$ matrix whose rows each sum to~1:
\begin{equation*}
    \P[a_1^n] = \sum_{s_1^n \in \mathcal{S}^n} \pi_{s_1} R_{s_1, a_1} P_{s_1, s_2} R_{s_2, a_2}\cdots P_{s_{n-1}, s_{n}} R_{s_n, a_n}
\end{equation*}
for $n \in \nn$ and $a_1^n \in \cA^n$.
We restrict our attention to the case where $\mathcal{S}$ is a finite set and $P$ is irreducible.
We view the entry $R_{s,a}$ as the probability of observing $a \in \cA$ at a given time step given the \emph{hidden state} $s \in \mathcal{S}$ at that same time step\,---\,the dynamics of the latter governed by the hidden-Markov chain~$(\pi,P)$. 
There exist only very singular examples of such measures for which \ref{it:FE} fails. As exhibited by our next lemma, this can only happen if the process is eventually almost-surely deterministic.

\begin{lemma}
    Let $\P$ be as above. Then, $\P$ satisfies~\ref{it:FE} if and only if, for each $s \in \mathcal{S}$, there exists $L$ such that 
    \[
        \#\{a \in \cA^L : \P[a | s_1 = s] > 0 \} > 1.
    \]
\end{lemma}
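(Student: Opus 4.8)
The plan is to exploit the hidden-Markov structure together with the weak Gibbs-type control we already have for Markov measures, reducing the question to a statement about the hidden chain. Recall that the condition in the statement says that from every hidden state~$s$ there is some horizon~$L$ over which the observed block~$a_1^L$ is genuinely random (more than one value is possible), as opposed to the degenerate case where the emission is forced for arbitrarily long times. I would first unpack this degenerate case: if for some~$s \in \mathcal{S}$ and every~$L$ one has $\#\{a \in \cA^L : \P[a \mid s_1 = s] > 0\} = 1$, then conditionally on starting in~$s$ the observation is deterministic forever, so there is a single sequence~$x$ with $\P[x_1^n \mid s_1 = s] = 1$ for all~$n$; since $P$ is irreducible, state~$s$ is recurrent and has a positive stationary weight~$\pi_s > 0$ (after passing to the stationary distribution of the irreducible chain), whence $\P[x_1^n] \geq \pi_s > 0$ for all~$n$, contradicting \ref{it:FE}. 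One subtlety to handle here is that ``deterministic emission from~$s$'' must be propagated along the hidden trajectory: I would argue that the set of states from which emission is eventually forced is closed under the transitions that have positive probability, so that the forced observation is consistent; the irreducibility of~$P$ then pins things down.

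For the converse, assume the nondegeneracy condition holds; I want to produce a uniform exponential decay rate for cylinders. The cleanest route is to fix, for each~$s$, a horizon~$L_s$ as in the hypothesis and set $L_\star := \max_s L_s$ and $\delta := \min\{\, 1 - \P[a \mid s_1 = s] : s \in \mathcal{S},\ a \in \cA^{L_\star},\ \P[a \mid s_1 = s] > 0\,\}$, which is strictly positive because $\mathcal{S}$ and $\cA^{L_\star}$ are finite and the hypothesis guarantees that no conditional block probability of length~$L_\star$ equals~$1$ (here one uses that if more than one block of length~$L_s$ is possible then more than one block of length~$L_\star \geq L_s$ is possible, so every individual length-$L_\star$ conditional probability is at most~$1 - \delta$). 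Now decompose a length-$n$ observation into $\lfloor n/L_\star\rfloor$ consecutive windows of length~$L_\star$. Conditioning successively on the hidden state at the start of each window and using the Markov property of the hidden chain, each window contributes a factor at most~$1 - \delta$ to the probability of that observed block, uniformly over the hidden state entering the window; summing over the hidden trajectory (equivalently, integrating out the hidden chain window by window) gives
\[
    \P[x_1^n] \leq (1-\delta)^{\lfloor n/L_\star\rfloor}
\]
for every $x \in \supp \P$, which is of the form~$\Exp{\gamma_+ n}$ with $\gamma_+ = \tfrac{1}{L_\star}\ln(1-\delta) < 0$ for all~$n$ large enough, i.e.\ \ref{it:FE} holds.

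The main obstacle I anticipate is making the ``condition on the hidden state, get a factor~$1 - \delta$ per window'' step fully rigorous: one is not literally conditioning on a single hidden state but summing $\pi_{s_1} R_{s_1,a_1}P_{s_1,s_2}\cdots$ over all hidden paths, and the factor~$1-\delta$ bounds $\sum_{a \text{ in this window}} \Pr[\text{window emits } a \mid \text{entering state}]$ only when we also remember that the \emph{specific} block~$x$ we are tracking is one particular admissible block. Concretely, one writes $\P[x_1^n] = \sum_{s_0}\P[\text{enter window 1 in }s_0]\,\P[x_{1}^{L_\star}\mid s_0]\,\P[x_{L_\star+1}^n \mid x_1^{L_\star}, s_0]$ and bounds the middle factor by~$1-\delta$, then iterates; the bookkeeping that the conditional law of the future given the past observation is again a mixture over hidden states of the same type is where care is needed, but it is standard for hidden-Markov chains. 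A secondary point is the degenerate-case argument: one should make sure the forced-emission state actually carries positive stationary mass, which follows once we note that an irreducible finite-state chain has a strictly positive stationary distribution and that \ref{it:FE} is a statement about the stationary measure~$\P$.
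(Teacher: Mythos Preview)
Your proposal is correct, and the failure direction (``if'') matches the paper's argument almost verbatim; your worry about ``propagating'' the forced emission along the hidden trajectory is unnecessary, since the hypothesis directly gives a single sequence~$x$ with $\P[x_1^n\mid s_1=s]=1$ for all~$n$, and $\P[x_1^n]\geq\pi_s>0$ follows immediately.

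For the ``only if'' direction you take a genuinely different and more elementary route than the paper. The paper invokes the Perron--Frobenius structure of~$P$ to collect the finitely many limit row-vectors~$\sigma$ of~$P^m$, bounds $\max_{a,\sigma}\sum_s\sigma_s\,\P[a\mid s_1=s]<1$, and then alternates observation blocks of length~$L$ with spacer blocks of length~$m$ during which the hidden-state distribution is forced close to some~$\sigma$; this yields $\P[a_1^n]\leq \delta(\delta+\epsilon\#\mathcal{S})^{q}$. You instead condition on the hidden state at the start of each window and observe that the posterior at the next window start is again a mixture over hidden states, so the recursion $M_n\leq(1-\delta)M_{n-L_\star}$ for $M_n:=\max_{s}\max_{a\in\cA^n}\P[a\mid s_1=s]$ closes immediately\,---\,no spacer, no spectral input. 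Your ``main obstacle'' dissolves once you phrase it this way: bound $\P[a_{L_\star+1}^n\mid a_1^{L_\star},s_1=s]=\sum_{s'}w_{s'}\,\P[a_{L_\star+1}^n\mid s_{L_\star+1}=s']\leq M_{n-L_\star}$ and iterate. This buys a cleaner constant and a shorter proof; the paper's detour through limit distributions is not needed in the finite-state setting.
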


\begin{proof}
    Suppose that for each $s\in \mathcal S$ there exists $L$ as above. By inspection of the canonical form of $P$ provided by the Perron--Frobenius theorem, one deduces that there exists a finite set $\Sigma'$ of possible row vectors~$\sigma$ that can arise as limit points for sequences of the form  $([P^{m}]_{i,\cdot\,})_{m=1}^\infty$. Let $\Sigma := \Sigma' \cup \{\pi\}$ with $\pi$ the unique invariant probability row vector for~$P$. By stochasticity, each $\sigma \in \Sigma$ has nonnegative entries that sum to 1.
    In this context, {by assumption}, there exists $L \in \nn$ such that 
    \begin{align*}
        \delta := \max_{a \in \supp \P_L}\max_{\sigma \in\Sigma} \sum_{s \in \mathcal{S}^L} \sigma_{s_1} P_{s_1,s_2} \cdots P_{s_{L-1},s_{L}} R_{s_1,a_1} \cdots R_{s_{L},a_{L}} 
        &= \max_{a \in \supp \P_L} \max_{\sigma \in \Sigma} \sum_{s \in \mathcal{S}} \sigma_{s} \P[a|s_1 = s]
    \end{align*}
    is strictly less than~$1$.
    Given $\epsilon > 0$, by inspection of the same canonical form, there exists $m \in \nn$ with the following property: for all~$i$, there is $\sigma \in \Sigma$ such that 
    \[
        [P^m]_{i,\,\cdot} <  \sigma + \epsilon
    \]
    Then, taking $a \in \supp \P$ and $n \geq L$, 
    \begin{align*}
        \P[a_1^n] 
            &\leq \P[a_1^{L + q(m+L)}]
    \end{align*}
    for $q := \max \{k \in \nn_0 : n \geq L + k(m+L) \}$.
    We introduce the shorthands $\mathcal{R}_0(s) = R_{s_1, a_1} \cdots R_{s_L, a_L}$,
    \[
        \mathcal{R}_k(s) = R_{s_{(k-1)(m+L)+L+1}, a_{(k-1)(m+L)+L+1}} \cdots R_{s_{k(m+L)+L}, a_{k(m+L)+L}}
    \]
    and
    \[
        \mathcal{R}'_k(s) = R_{s_{k(m+L) +1}, a_{k(m+L) +1}} \cdots R_{s_{k(m+L)+L}, a_{k(m+L)+L}}
    \]
    when $1 \leq k \leq q$. We also identify $s^1_0 \equiv s_L$, $s^2_0 \equiv s^1_{m+L}$, $s^3_0 \equiv s^2_{m+L}$, and so on, and so forth. One then obtains:
    \begin{align*}
        &\P[a_1^{L + q(m+L)}]\\
            &\quad = 
            \sum_{\substack{s_1, \dotsc, s_L \\ s_1^{k}, \dotsc, s_{m+L}^k \\ \text{for } 1 \leq k \leq q}}
                \pi_{s_1} P_{s_1, s_2} \cdots P_{s_{L-1}, s_{L}} \mathcal{R}_0(s) \prod_{k=1}^q P_{s^{k}_{0}, s^k_1}P_{s^{k}_{1}, s^{k}_2} \dotsb P_{s^{k}_{m+L-1}, s^{k}_{m+L}} \mathcal{R}_k(s) \\
            &\quad \leq 
            \sum_{\substack{s_1, \dotsc, s_L \\ s_{m}^{k}, \dotsc, s_{m+L}^k \\ \text{for } 1 \leq k \leq q}}
                \pi_{s_1} P_{s_1, s_2} \cdots P_{s_{L-1}, s_{L}} \mathcal{R}_0(s)
                \prod_{k=1}^q (\sigma_{s_m^k}^{(s_0^k)} + \epsilon)P_{s^{k}_{m}, s^{k}_{m+1}} \dotsb P_{s^{k}_{m+L-1}, s^{k}_{m+L}} \mathcal{R}_k'(s)
    \end{align*}
    for some appropriate choices of $\sigma^{(s^{k}_0)} \in \Sigma$ that depend on~$m$ and the index $s^{k}_0$ only. Therefore, 
    \begin{align*}
        \P[a_1^{L + q(m+L)}]
            &\leq \delta \cdot (\delta + \epsilon (\#\mathcal{S}))^{q}.
    \end{align*}
    By taking $\epsilon>0$ such that $\delta + \epsilon (\#\mathcal{S}) < 1$ and noting that $q$ scales linearly with $n$, \ref{it:FE} holds.

    To see the converse implication, suppose that there exists $t\in\mathcal S$ such that there is no $L$ as above. Then, there exists $a \in \Omega$ such that $\P[a_1^n|s_1 = t] = 1$ for all $n \in \nn$. Since $$\P[a_1^n] \geq \P[a_1^n | s_1 = t] \cdot \pi_t = \pi_t$$ for all $n \in \nn$ and $\Exp{\gamma_{+}n}$ is eventually smaller than $\pi_t > 0$ for all $\gamma_{+} < 0$, \ref{it:FE} fails.
\end{proof}

One can show that every stationary hidden-Markov measure satisfies the upper bound in \ref{it:ID}. But in general,\,---\,even if $P$ is irreducible\,---\,only a weaker form of the lower bound, known as \emph{selective lower decoupling}, holds; see~\cite[\S{2}]{BCJPPExamples} and~\cite[\S{2}]{CJPS19}. The fact that selective lower decoupling implies \ref{it:KB} but does not imply the condition called~{\ref{it:Ad}} in Section~\ref{ssec:comments} seems to pose a genuine obstacle. Determining whether the ZM estimation remains generally valid in the class of irreducible, hidden-Markov measures remains\,---\,to our knowledge\,---\,an important open problem.

In the further specialized case where the elements of~$R$ are all in~$\{0,1\}$\,---\,this is sometimes called the \emph{function-Markov} or \emph{lumped-Markov} case\,---, some conditions for the $g$-measure property (and thus \ref{it:ID}) are discussed in~\cite{CU03,Yo10,Ve11}. However, it is not difficult to find examples for which none of these known sufficient conditions hold\begin{figure}[b]
    \centering
    \includegraphics{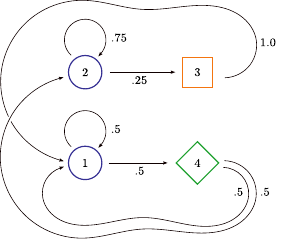}
    \caption{An example that does not satisfy {\ref{it:Ad}}: a direct computation shows that $[{\color{newdarkblue}\circ}{\color{newdarkblue}\circ} \dotsc {\color{newdarkblue}\circ}{\color{newgreen}\diamond}]$ is too unlikely compared to $[{\color{newdarkblue}\circ}{\color{newdarkblue}\circ} \dotsc {\color{newdarkblue}\circ}]$.}
    \label{fig:hmm-example}
    \end{figure}.
\begin{example}
\newcommand{\ssq}{\mathbin{\rotatebox[origin=c]{-45}{$\diamond$}}}
    The stationary measure on~$\{{\color{newdarkblue}\circ}, {\color{newgreen}\diamond},{\color{orange}\ssq}\}^\nn$ built from the four-hidden-state chain depicted in Figure~\ref{fig:hmm-example} satisfies the upper bound in~\ref{it:ID}, as well as \ref{it:FE} and \ref{it:SE}, but not~\ref{it:Ad}\,---\,and hence not~\ref{it:ID}.
\end{example}

\paragraph*{Acknowledgments.} The authors would like to thank G.\ Cristadoro, N.\ Cuneo and V.\ Jak\v{s}i\'{c} for stimulating discussions on the topic of this note.  The research of NB and RR was partially funded by the \emph{Fonds de recherche du Qu\'ebec\,---\,Nature et technologies} (FRQNT) and by the Natural Sciences and Engineering Research Council of Canada (NSERC). The research of RG was partially funded by the {Rubin Gruber Science Undergraduate Research Award} and {Axel W Hundemer}. The research of GP was done under the auspices of the \emph{Gruppo Nazionale di Fisica Matematica} (GNFM) section of the \emph{Istituto Nazionale di Alta Matematica} (INdAM). Part of this work was done during a stay of the four authors in Neuville-sur-Oise, funded by CY Initiative (grant \emph{Investissements d'avenir} ANR-16-IDEX-0008).

\newcommand{\etalchar}[1]{$^{#1}$}


\begin{thebibliography}{CDEJR23b}

\bibitem[BBCDE08]{B+08}
Chiara Basile, Dario Benedetto, Emanuele Caglioti, and Mirko Degli~Esposti.
\newblock An example of mathematical authorship attribution.
\newblock {\em J. Math. Phys.}, 49(12), 2008.

\bibitem[BCJP21]{BCJPPExamples}
Tristan Benoist, No{\'e} Cuneo, Vojkan Jak{\v{s}}i{\'c}, and Claude-Alain
  Pillet.
\newblock On entropy production of repeated quantum measurements~{II}.
  {E}xamples.
\newblock {\em J. Stat. Phys.}, 182(3):1--71, 2021.

\bibitem[BCL02]{BCL02}
Dario Benedetto, Emanuele Caglioti, and Vittorio Loreto.
\newblock Language trees and zipping.
\newblock {\em Phys. Rev. Lett.}, 88:048702, 2002.

\bibitem[BFV19]{BFV19}
Steven Berghout, Roberto Fern{\'a}ndez, and Evgeny Verbitskiy.
\newblock On the relation between {G}ibbs and g-measures.
\newblock {\em Ergodic Theor. Dyn. Syst.}, 39(12):3224--3249, 2019.

\bibitem[BGM{\etalchar{+}}21]{BGMMT21}
Sebasti{\'a}n Barbieri, Ricardo G{\'o}mez, Brian Marcus, Tom Meyerovitch, and
  Siamak Taati.
\newblock Gibbsian representations of continuous specifications: the theorems
  of {K}ozlov and {S}ullivan revisited.
\newblock {\em Commun. Math. Phys.}, 382:1111--1164, 2021.

\bibitem[BJPP18]{BJPP18}
Tristan Benoist, Vojkan Jak{\v{s}}i{\'c}, Yan Pautrat, and Claude-Alain Pillet.
\newblock On entropy production of repeated quantum measurements~{I}. {G}eneral
  theory.
\newblock {\em Commun. Math. Phys.}, 357(1):77--123, 2018.

\bibitem[Bow74]{Bo74}
Rufus Bowen.
\newblock Some systems with unique equilibrium states.
\newblock {\em Math. Syst. Theor.}, 8(3):193--202, 1974.

\bibitem[CDEJR23a]{CDEJR23w}
Giampaolo Cristadoro, Mirko Degli~Esposti, Vojkan Jak{\v{s}}i{\'c}, and Renaud
  Raqu{\'e}pas.
\newblock On a waiting-time result of {K}ontoyiannis: mixing or decoupling?
\newblock {\em Stoch. Proc. Appl.}, 2023.
\newblock To appear.

\bibitem[CDEJR23b]{CDEJR23}
Giampaolo Cristadoro, Mirko Degli~Esposti, Vojkan Jak{\v{s}}i{\'c}, and Renaud
  Raqu{\'e}pas.
\newblock Recurrence times, waiting times and universal entropy production
  estimators.
\newblock {\em Lett. Math. Phys.}, 113(1):article 19, 2023.

\bibitem[CF05]{CF05}
David~Pereira Coutinho and M{\'a}rio~AT Figueiredo.
\newblock Information theoretic text classification using the {Z}iv--{M}erhav
  method.
\newblock In Jorge~S Marques, Nicol{\`a}s P{\'e}rez de~la Blanca, and Pedro
  Pina, editors, {\em Pattern Recognition and Image Analysis}, volume 3523 of
  {\em Lecture Notes in Computer Science}, pages 355--362. Springer, Berlin,
  2005.

\bibitem[CFF10]{CFF10}
David~Pereira Coutinho, Ana~LN Fred, and M{\'a}rio~AT Figueiredo.
\newblock One-lead {ECG}-based personal identification using {Z}iv--{M}erhav
  cross parsing.
\newblock In {\em 20th International Conference on Pattern Recognition}, pages
  3858--3861. IEEE, 2010.

\bibitem[CHM{\etalchar{+}}14]{CHMMP14}
Nishant Chandgotia, Guangyue Han, Brian Marcus, Tom Meyerovitch, and Ronnie
  Pavlov.
\newblock One-dimensional {M}arkov random fields, {M}arkov chains and
  topological {M}arkov fields.
\newblock {\em Proc. Amer. Math. Soc.}, 142(1):227--242, 2014.

\bibitem[CJPS19]{CJPS19}
No{\'e} Cuneo, Vojkan Jak{\v{s}}i{\'c}, Claude-Alain Pillet, and Armen
  Shirikyan.
\newblock Large deviations and fluctuation theorem for selectively decoupled
  measures on shift spaces.
\newblock {\em Rev. Math. Phys.}, 31(10):1950036, 2019.

\bibitem[CR23]{CR23}
No{\'e} Cuneo and Renaud Raqu{\'e}pas.
\newblock Large deviations of return times and related entropy estimators on
  shift spaces.
\newblock {\em arXiv preprint}, 2023.
\newblock 2306.05277 [math.PR].

\bibitem[CU03]{CU03}
Jean-Ren{\'e} Chazottes and Edgardo Ugalde.
\newblock Projection of {M}arkov measures may be {G}ibbsian.
\newblock {\em J. Stat. Phys.}, 111(5/6):1245--1272, 2003.

\bibitem[DGS76]{DeGrSi}
Manfred Denker, Christian Grillenberger, and Karl Sigmund.
\newblock {\em Ergodic theory on compact spaces}, volume 527 of {\em Lecture
  Notes in Mathematics}.
\newblock Springer, 1976.

\bibitem[K{\L}O16]{KLO}
Dominik Kwietniak, Martha {\L}{\k{a}}cka, and Piotr Oprocha.
\newblock A panorama of specification-like properties and their consequences.
\newblock In Sergi{\v{\i}} Kolyad, Martin M{\"o}ller, Pieter Moree, and Thomas
  Ward, editors, {\em Dynamics and numbers}, volume 669 of {\em Contemp.
  Math.}, pages 155--186. Amer. Math. Soc., Providence, 2016.

\bibitem[Kon98]{Ko98}
Ioannis Kontoyiannis.
\newblock Asymptotic recurrence and waiting times for stationary processes.
\newblock {\em J. Theor. Probab.}, 11(3):795--811, 1998.

\bibitem[LMDEC19]{LMDEC19}
Marco Lippi, Marcelo~A. Montemurro, Mirko Degli~Esposti, and Giampaolo
  Cristadoro.
\newblock Natural {L}anguage {S}tatistical {F}eatures of {LSTM}--{G}enerated
  {T}exts.
\newblock {\em IEEE Trans. Neural Netw. Learn. Syst.}, 30(11):3326--3337, 2019.

\bibitem[LPS95]{LPS95}
John~T Lewis, Charles-{\'E}douard Pfister, and Wayne~G Sullivan.
\newblock Entropy, concentration of probability and conditional limit theorems.
\newblock {\em Markov Proc. Relat. Fields}, 1(3):319--386, 1995.

\bibitem[OST05]{OST05}
Eric Olivier, Nikita Sidorov, and Alain Thomas.
\newblock On the {G}ibbs properties of {B}ernoulli convolutions related to
  $\beta$-numeration in multinacci bases.
\newblock {\em Monatshefte Math.}, 145(2):145--174, 2005.

\bibitem[Pfi02]{Pf02}
Charles-{\'E}douard Pfister.
\newblock Thermodynamical aspects of classical lattice systems.
\newblock In Vladas Sidoravicius, editor, {\em In and Out of Equilibrium:
  Probability with a Physics Flavor}, volume~51 of {\em Prog. Probab.}, pages
  393--472. Birkh{\"a}user, Boston, 2002.

\bibitem[PS20]{PS20}
Charles-{\'E}douard Pfister and Wayne~G Sullivan.
\newblock Asymptotic decoupling and weak {G}ibbs measures for finite alphabet
  shift spaces.
\newblock {\em Nonlinearity}, 33(9):4799--4817, 2020.

\bibitem[RGS{\etalchar{+}}22]{R+22}
Sunghan Ro, Buming Guo, Aaron Shih, Trung~V Phan, Robert~H Austin, Dov Levine,
  Paul~M Chaikin, and Stefano Martiniani.
\newblock Model-free measurement of local entropy production and extractable
  work in active matter.
\newblock {\em Phys. Rev. Lett.}, 129(22):220601, 2022.

\bibitem[RP12]{RP12}
{\'E}dgar Rold{\'a}n and Juan M~R Parrondo.
\newblock Entropy production and {K}ullback--{L}eibler divergence between
  stationary trajectories of discrete systems.
\newblock {\em Phys. Rev. E}, 85:031129, 2012.

\bibitem[Rue04]{Rue}
David Ruelle.
\newblock {\em Thermodynamic Formalism}.
\newblock Cambridge University Press, Cambridge, second edition, 2004.

\bibitem[Shi93]{Sh93}
Paul~C Shields.
\newblock Waiting times: positive and negative results on the {W}yner--{Z}iv
  problem.
\newblock {\em J. Theor. Probab.}, 6(3):499--519, 1993.

\bibitem[vEFS93]{vEFS93}
Aernout~CD van Enter, Roberto Fern{\'a}ndez, and Alan~D Sokal.
\newblock Regularity properties and pathologies of position-space
  renormalization-group transformations: Scope and limitations of {G}ibbsian
  theory.
\newblock {\em J. Stat. Phys.}, 72:879--1167, 1993.

\bibitem[Ver11]{Ve11}
Evgeny Verbitskiy.
\newblock Thermodynamics of hidden {M}arkov processes.
\newblock In Brian Marcus, Karl Petersen, and Tsachy Weissman, editors, {\em
  Entropy of Hidden {M}arkov Processes and Connections to Dynamical Systems},
  London Math. Soc. Lect. Notes, pages 258--272. Cambridge University Press,
  Cambridge, 2011.

\bibitem[Wal01]{Wa01}
Peter Walters.
\newblock Convergence of the {R}uelle operator for a function satisfying
  {B}owen's condition.
\newblock {\em Trans. Amer. Math. Soc.}, 353(1):327--347, 2001.

\bibitem[Wal05]{Wa05}
Peter Walters.
\newblock Regularity conditions and {B}ernoulli properties of equibrium states
  and g-measures.
\newblock {\em J. London Math. Soc.}, 71(2):379--396, 2005.

\bibitem[WZ89]{WZ89}
Aaron~D Wyner and Jacob Ziv.
\newblock Some asymptotic properties of the entropy of a stationary ergodic
  data source with applications to data compression.
\newblock {\em IEEE Int. Symp. Inf. Theory}, 35(6):1250--1258, 1989.

\bibitem[Yoo10]{Yo10}
Jisang Yoo.
\newblock On factor maps that send {M}arkov measures to {G}ibbs measures.
\newblock {\em J. Stat. Phys.}, 141(6):1055--1070, 2010.

\bibitem[ZL78]{LZ78}
Jacob Ziv and Abraham Lempel.
\newblock Compression of individual sequences via variable-rate coding.
\newblock {\em IEEE Trans. Inf. Theory}, 24(5):530--536, 1978.

\bibitem[ZM93]{MZ93}
Jacob Ziv and Neri Merhav.
\newblock A measure of relative entropy between individual sequences with
  application to universal classification.
\newblock {\em IEEE Trans. Inf. Theory}, 39(4):1270--1279, 1993.

\end{thebibliography}
\end{document}